\newcommand{\mathsym}[1]{{}}
\newcommand{\unicode}[1]{{}}
\theoremstyle{plain}
\newtheorem{theorem}{Theorem}
\newtheorem{lemma}[theorem]{Lemma}
\newtheorem{corollary}[theorem]{Corollary}
\newtheorem{proposition}[theorem]{Proposition}
\numberwithin{theorem}{section}
\theoremstyle{definition}
\newtheorem{conjecture}[theorem]{Conjecture}
\theoremstyle{remark}
\newtheorem{remark}[theorem]{Remark}
\numberwithin{equation}{section}
\newcommand{\Z}{\mathbb Z}
\newcommand{\half}{\tfrac{1}{2}}
\newcommand{\E}{\mathbb{E}}
\newcommand{\one}{\mathbbm{1}} % One?
\DeclareMathOperator{\tr}{Tr} % Trace
\newcommand{\MeijerG}[8][\Big]{G^{{ #2 },{ #3 }}_{{ #4 },{ #5 }} #1( \begin{matrix} #6 \\ #7 \end{matrix}\, #1\vert\, #8 #1)}
\newcommand{\hypergeometric}[6][\bigg]{\,{}_{#2} F_{#3} #1( \begin{matrix} #4 \\ #5 \end{matrix}\, #1\vert\, #6 #1)}
\begin{document}

%\setlength{\headheight}{20pt}

%\begin{titlepage}
\title[How many eigenvalues of a product of truncated orthogonal matrices are real?]%
{How many eigenvalues of a product\\ of truncated orthogonal matrices are real?}
\author{P. J. Forrester}
\address{ARC Centre of Excellence for Mathematical and Statistical Frontiers, School of Mathematics and Statistics, The University of Melbourne, Victoria 3010, Australia}
\email{pjforr@unimelb.edu.au}
\author{J. R. Ipsen}
\address{ARC Centre of Excellence for Mathematical and Statistical Frontiers, School of Mathematics and Statistics, The University of Melbourne, Victoria 3010, Australia}
\email{jesper.ipsen@unimelb.edu.au}
\author{S. Kumar}
\address{Department of Physics, Shiv Nadar University, Gautam Buddha Nagar, Uttar Pradesh - 201314, India}
\email{skumar.physics@gmail.com}
\date{\today}

%\eads{\mailto{pjforr@unimelb.edu.au}}

\begin{abstract}
A truncation of a Haar distributed orthogonal random matrix gives rise to a matrix whose eigenvalues are either real or complex conjugate pairs, and are supported within the closed unit disk. This is also true for a product $P_m$ of $m$ independent truncated orthogonal random matrices. One of most basic questions for such asymmetric matrices is to ask for the number of real eigenvalues. In this paper, we will exploit the fact that the eigenvalues of $P_m$ form a Pfaffian point process to obtain an explicit determinant expression for the probability of finding any given number of real eigenvalues. We will see that if the truncation removes an even number of rows and columns from the original Haar distributed orthogonal matrix, then these probabilities will be rational numbers. Finally, based on exact finite formulae, we will provide conjectural expressions for the asymptotic form of the spectral density and the average number of real eigenvalues as the matrix dimension tends to infinity.

\end{abstract}

%\subjclass[2000]{15A52, 33C45, 33E17, 42C05, 60K35, 62E15}
%\ams{15A52, 33C45, 33E17, 42C05, 60K35, 62E15}
%\keywords{random matrices, eigenvalue distribution, Wishart matrices, Painlev\'e equations, isomonodromic deformations}
%\noindent{\it Keywords\/}: random matrices, eigenvalue distribution, Wishart matrices, Painlev\'e equations, isomonodromic deformations.

%\end{titlepage}
\maketitle

%\tableofcontents

\section{Introduction}\label{s1}

In the study of random real symmetric matrices, the notion of an orthogonally invariant probability density function (PDF) is of primary importance. Let $X$ be an $N$-by-$N$ symmetric random matrix and let the PDF (with respect to the flat measure) be denoted $P(X)$. Orthogonal invariance means that
\begin{equation}\label{1.1}
P(Q^TXQ)=P(X)
\end{equation}
for all real orthogonal matrices $Q\in O(N)$. Since real symmetric matrices are diagonalised by real orthogonal matrices, a corollary is that $P$ depends only on the eigenvalues. The latter feature is to be combined with the fact that the volume element $(\mathrm{d}X)=\prod_{1\leq i\leq j\leq N}\,\mathrm{d}X_{ij}$, when written in terms of the eigenvalues $\{\lambda_i\}$ and eigenvectors $\{q_i\}$, factorises according to
\begin{equation}\label{1.2}
(\mathrm{d}X)=\prod_{1\leq i<j\leq N}|\lambda_j-\lambda_i|(Q^T\mathrm{d}Q),
\end{equation}
where $(Q^T\mathrm{d}Q)$ is the invariant measure for the matrix of eigenvectors $Q=[q_1,\ldots,q_N]$, see e.g. \cite[Eq. (1.11)]{Fo10}. One then has for the eigenvalue PDF the functional form
\begin{equation}\label{1.3}
C_NP\left(\mathrm{diag}(\lambda_1,\ldots,\lambda_N)\right)\prod_{1\leq i<j\leq N}|\lambda_j-\lambda_i|,
\end{equation}
where $C_N$ is a normalisation constant given by integration over the eigenvectors.

Now, suppose that when restricted to diagonal matrices, $P(X)$ exhibits the further structure
\begin{equation}\label{1.4}
P\left(\mathrm{diag}(\lambda_1,\ldots,\lambda_N)\right)=\prod_{l=1}^Nw(\lambda_l).
\end{equation}
Important examples in random matrix theory include the classical Hermite, Laguerre, Jacobi, and Cauchy matrix weights given by
\begin{equation*}
e^{-\mathrm{Tr}X^2},\qquad 
\mathrm{det}X^{\alpha}e^{-\mathrm{Tr}X}\,\one_{X>0},\qquad
\det X^{\alpha}\det(\mathbbm{1}-X)^{\beta}\,\one_{0<X<1}, \qquad\text{and}\qquad 
\det(\mathbbm{1}+X^2)^{-\gamma},
\end{equation*}
respectively.
Here $\one_J$ is the indicator function (i.e. $\one_J=1$ if $J$ is true and $\one_J=0$ otherwise), and the matrix inequality $A>B$ for symmetric matrices $A$ and $B$ should be read as: `$A-B$ is positive definite'. Another example satisfying
(\ref{1.4}) is the family of PDFs
\begin{equation*}
e^{-\tr V(X)}\qquad \text{with} \qquad V(X)={\sum_{l=1}^{\infty}t_lX^l}
\end{equation*}
indexed by the infinite sequence $\{t_l\}_{l=1}^{\infty}$, constrained only by suitable decay at infinity. We remark that it is fundamental to random matrix theory that if the separation property~\eqref{1.4} holds, then the eigenvalue PDF~\eqref{1.3} corresponds to a Pfaffian point process (see e.g. \cite[Ch. 6]{Fo10} and Section \ref{s4.1} below).

Rather than symmetric matrices, consider instead an $N$-by-$N$ asymmetric random real matrix, $X$. Now, real orthogonal matrices $Q$ can no longer be used to transform $X$ into diagonal matrix form. However, a transformation to a block upper triangular form can still be obtained according to the real Schur decomposition
\begin{equation}\label{1.5}
X=Q(D^{(k)}+T)Q^T.
\end{equation}
Here the superscript $k$ labels the number of real eigenvalues ($k$ must then have the same parity as $N$, i.e. $k\equiv N\mod2$); the remaining $N-k$ eigenvalues appear as complex conjugate pairs. The matrix $D^{(k)}$ is block diagonal with the first $k$ diagonal entries the real eigenvalues of $X$, $\{\lambda_1,\ldots,\lambda_k\}$, and the next $(N-k)/2$ block entries the $2\times2$ real matrices $\{G_s\}_{s}$ with complex eigenvalues $\{x_s\pm\mathrm{i}y_s\}_{s}$ coinciding with the complex eigenvalues of $X$. The matrix $T$ is a strictly upper triangular matrix.

Analogous to \eqref{1.2}, in terms of these variables the volume element $(\mathrm{d}X)=\prod_{i,j=1}^N\,\mathrm{d}X_{ij}$ transforms according to
\begin{equation}\label{1.6}
(\mathrm{d}X)=\prod_{j<p}|\lambda(D^{(k)}_{pp})-\lambda(D^{(k)}_{jj})|\,(\mathrm{d}T)(Q^T\mathrm{d}Q)\prod_{j=1}^k\mathrm{d}\lambda_j\prod_{s=1}^{(N-k)/2}\mathrm{d}G_s,
\end{equation}
where $\lambda(D^{(k)}_{pp})$ refers to the eigenvalues of the $(pp)$\textsuperscript{th} block entry of $D^{(k)}$. In particular, the measure again factorises. Substituting \eqref{1.5} into \eqref{1.1} shows
\begin{equation*}
P(QXQ^T)=P(D^{(k)}+T),
\end{equation*}
so in the case that $P$ is orthogonally invariant, the dependence on $Q$ contributes only to the normalisation of the eigenvalue PDF just as for symmetric matrices. On the other hand, in distinction to the circumstance for real symmetric matrices, the calculation of the eigenvalue PDF still requires that $P$ be integrated over the triangular matrix $T$, giving in place of \eqref{1.3} the expression
\begin{multline}\label{1.7}
{C_N}\sum_k\frac{1}{k!\left((N-k)/2\right)!}\int\,P(D^{(k)}+T)(\mathrm{d}T)\prod_{s=1}^{(N-k)/2}\delta\left(G_s-\mathrm{diag}(x_s\pm\mathrm{i}y_s)\right)(\mathrm{d}G_s)
\\ \times\prod_{j<p}|\lambda(D_{pp})-\lambda(D_{jj})|.
\end{multline}
Here, $C_N$ is a constant coming from integration over $Q$ and the binomial type factor arises from relaxing the ordering needed for \eqref{1.5} to be one-to-one (the sum over $k$ includes only terms with the same parity as $N$).
It has been known since the work of Sinclair \cite{Si06} that in the circumstance that
\begin{equation}\label{1.8}
\int\,P(D^{(k)}+T)(\mathrm{d}T)\int\prod_{s=1}^{(N-k)/2}\delta\left(G_s-\mathrm{diag}(x_s\pm\mathrm{i}y_s)\right)(\mathrm{d}G_s)
=\prod_{l=1}^kw_r(\lambda_l)\prod_{j=1}^{(N-k)/2}w_c(x_j,y_j)
\end{equation}
for some weights $w_r(\lambda)$ and $w_c(x,y)$, then \eqref{1.7} corresponds to a two-component --- the real and complex eigenvalues --- Pfaffian point process. However, the choices of $P$ which give rise to \eqref{1.8} are far more restrictive than those for symmetric matrices permitting the factorisation \eqref{1.4}. 

The first identified case of~\eqref{1.8} was that of standard real Gaussian matrices, corresponding to $P(X)$ proportional to $e^{-\mathrm{Tr}X^TX}$ \cite{LS91, Ed97}. Some years later, the product matrices $X=X_1^{-1}X_2$ and $X=X_1X_2$ with each $X_i$ a standard real Gaussian matrix, were shown to be further examples \cite{APS10, FM11}, as  was $X$ defined as an $N\times N$ sub-block of a $(N+L)\times(N+L)$ real orthogonal matrix \cite{KSZ09}. Ipsen and Kieburg \cite{IK14} extended these results to an arbitrary sized matrix product
\begin{equation}\label{1.9}
P_m=X_1X_2\cdots X_m
\end{equation}
with each $X_i$ either a standard real Gaussian matrix or a truncation of a real orthogonal matrix. For a product of real Gaussian matrices, probabilistic and statistical quantities of the Pfaffian point process formed by the eigenvalues were calculated and analysed in the recent work \cite{FI16}; see also \cite{Si16}. It is our purpose in the present work to undertake an analogous study of the Pfaffian point process for the eigenvalues of the matrix product \eqref{1.9} with each $X_i$ the truncation of a real orthogonal matrix. A first step in this direction has been made in another recent work \cite{FK16}, in which determinantal formulae were given for the probability that all eigenvalues are real, and their arithmetic properties were analysed. It was also seen the probability that all eigenvalues are real tends to unity when the number of factors tends to infinity; this is part of much general result expected to hold for products of random matrices~\cite{La13,Fo14,Ip15lya,HJL15,AI15,Ip15,Re16,Re17}.

To undertake this study requires first revisiting the work of \cite{IK14} on the eigenvalue PDF for products of truncations of real orthogonal matrices. It turns out that the form given therein does not explicitly isolate the functional forms $w_r$ and $w_c$ in \eqref{1.8}. Rather it treats the real and complex eigenvalues on an equal footing, which is not optimal for our purposes. In Section~\ref{s2} we provide the functional form for the eigenvalue PDF of a product of $m$ matrices given as $N\times N$ sub-blocks of $(N+L)\times(N+L)$ real orthogonal Haar distributed random matrices. This generalises the $m=1$ result found by Khoruzheko, Sommers and Zyczkowski~\cite{KSZ09}. Under the constraint that there are exactly $k$ real eigenvalues ($k$ of the same parity as $N$), this PDF with $\lambda_l \in (-1,1)$
and $(x_j,y_j) \in D_+$, where $D_+$ denotes the open half unit disk $|z|<1 $ and $y>0$, is equal to
\begin{equation}
 \label{1.10}
\frac{K_{N,L}}{k!\left((N-k)/2\right)!}\left|\Delta\left(\{\lambda_l\}_{l=1}^k\cup\{x_j\pm\mathrm{i}y_j\}_{j=1}^{(N-k)/2}\right)\right|
\prod_{j=1}^kw(\lambda_j;L)\prod_{j=1}^{(N-k)/2}2\left(w\left((x_j,y_j);L\right)\right)^2,
\end{equation}
where 
\begin{equation}
\Delta\left(\{z_l\}_{k=1}^p\right)=\prod_{1\leq i<j\leq p}(z_i-z_j)
\end{equation}
denotes the Vandermonde determinant. With
\begin{equation}\label{1.11}
\mathrm{vol}\left(O(p)\right)=\frac{2^p\pi^{p(p+1)/4}}{\prod_{j=1}^p\Gamma(j/2)}
\end{equation}
being the volume of the orthogonal group, we have (\cite[Below eq. (6)]{KSZ09} contains a typo, which was corrected in \cite{Ma11})
\begin{equation}\label{1.12}
K_{N,L}
=\frac{\mathrm{vol}\left(O(L)\right)\mathrm{vol}\left(O(N)\right)}{\mathrm{vol}\left(O(L+N)\right)}\left(\frac{(2\pi)^L}{L!}\right)^{N/2}
=\bigg(\frac{2^L}{L!}\bigg)^{\!N/2}\prod_{j=1}^N\frac{\Gamma(\frac{L+j}2)}{\Gamma(\frac j2)}.
\end{equation}
Furthermore, the weight function is given by
\begin{equation}\label{1.13}
w(z;L)=\left\{\begin{array}{ll}\left(\frac{L(L-1)}{2\pi}|1-z^2|^{L-2}\int_{2|\mathrm{Im}\,z|/|1-z^2|}^1(1-t^2)^{(L-3)/2}\mathrm{d}t\right)^{1/2},&\quad L>1
\\\left(\frac{1}{2\pi}\right)^{1/2}|1-z^2|^{-1/2},&\quad L=1\end{array}\right. .
\end{equation}

Our main result, stated in Theorem \ref{theorem1}, identifies both $w_r$ and $w_c$ in \eqref{1.7}. However, as already present in the study of the eigenvalues of the product \eqref{1.8} for each $X_i$ a real standard Gaussian \cite{FI16}, the expression for $w_c$ is too complicated for further analysis (unless $m=1$) so we restrict attention to the computation of statistical and probabilistic properties of the real eigenvalues. In Section \ref{s3} we give a determinantal formula (with entries given by certain Meijer $G$-functions) for the probabilities $p_{N,k}^{P_m}$ that the product matrix~\eqref{1.9} has exactly $k$ real eigenvalues. In the case $k=N$, recent work \cite{FK16} has demonstrated special arithmetic properties of these probabilities. We further consider this theme, as well as some questions relating to the large $N$ asymptotics. In Section \ref{s4} the explicit form of the $k$-point correlation function for the real eigenvalues is presented, with the case $k=1$ corresponding to the density of the real eigenvalues. This allows various scaling limits to be analysed, and a formula for the expected number of real eigenvalues to be presented.

\section{The eigenvalue PDF}\label{s2}

Consider an $(L_i+N)\times(L_i+N)$ real orthogonal matrix chosen with Haar measure. Denoting by $X_i$ an $N\times N$ sub-block, it is straightforward to show (see e.g. \cite[\S3.8.2]{Fo10}) that $X_i^TX_i$ will have $N-L_i$ eigenvalues equal to unity for $N>L_i$. This implies that the distribution of $X_i$ is then singular. On the other hand, for $N\leq L_i$ the distribution is absolutely continuous with density \cite{Fo06a, KSZ09}
\begin{equation}\label{2.1}
P(X)=C_{N,L_i}\det(\mathbbm{1}-X_i^TX_i)^{(L_i-N-1)/2}.
\end{equation}
with
\begin{equation}\label{constant-trunc}
C_{N,L_i}=\frac{\left(\mathrm{vol}\,O(L_i)\right)^2}{\mathrm{vol}\,O(L_i+N)\mathrm{vol}\,O(L_i-N)}
=\frac{1}{\pi^{N^2/2}}\prod_{j=1}^N\frac{\Gamma(\frac{L_i+j}2)}{\Gamma(\frac{L_i-N+j}2)}
\end{equation}
Not surprisingly, the calculation of the eigenvalue PDF of $X_i$ is much simpler in this setting (compare the derivation of \eqref{1.10} given in \cite{Ma11} to that given in \cite{Fi12}, with the latter providing the derivation of \eqref{1.10} that was sketched in \cite{KSZ09}). Since the final functional form is insensitive to this detail, we will proceed with our derivation of the eigenvalue PDF for the product \eqref{1.9} assuming that each $X_i$ has a density \eqref{2.1}. 
% At the end we will comment on the modifications necessary in the singular case.

\begin{theorem}\label{theorem1}
Consider the matrix product \eqref{1.9} with each $X_i$ the $N\times N$ sub-block of an $(L_i+N)\times(L_i+N)$ real orthogonal matrix. Given that there are $k$ real eigenvalues, where $k$ is of the same parity as $N$, the eigenvalue PDF,
supported on the same domain as (\ref{1.10}), is
\begin{equation}\label{2.2}
\frac{\prod_{i=1}^mK_{N,L_i}}{k!\left((N-k)/2\right)!}\left|\Delta\left(\{\lambda_l\}_{l=1}^k\cup\{x_j\pm\mathrm{i}y_j\}_{j=1}^{(N-k)/2}\right)\right|\prod_{j=1}^kw_r^{(m)}(\lambda_j)\prod_{j=1}^{(N-k)/2}w_c^{(m)}((x_j,y_j))
\end{equation}
where $K_{N,L_i}$ is given by \eqref{1.12},
\begin{equation}\label{2.3}
w_r^{(m)}(\lambda)=\int_{(0,1)^m}
\mathrm{d}\lambda^{(1)}\cdots\mathrm{d}\lambda^{(m)}\delta(\lambda-\lambda^{(1)}\cdots\lambda^{(m)})\prod_{l=1}^mw(\lambda^{(l)};L_l)
\end{equation}
with
\begin{equation}\label{2.4}
w(\lambda;L)=\frac{(1-\lambda^2)^{L/2-1}}{\sqrt{2\pi}}
\left(L\frac{\Gamma(1/2)\Gamma\left((L+1)/2\right)}{\Gamma(L/2)}\right)^{1/2},
\end{equation}
and
\begin{equation}\label{2.5}
w_c^{(m)}((x,y))=\int_0^{1-x^2-y^2}\mathrm{d}\delta\frac{\delta}{\sqrt{\delta^2+4y^2}}W\left(\begin{bmatrix}\mu_+&0\\0&\mu_-\end{bmatrix}\right)
\end{equation}
with
\begin{equation}\label{2.6}
\mu_{\pm}=\half\left(\pm|\delta|+(\delta^2+4(x^2+y^2))^2\right)
\end{equation}
and
\begin{multline}\label{2.7}
W(G)=\prod_{i=1}^m\frac{L_i(L_i-1)}{\pi}\int_{|G^{(i)}| < 1}(\mathrm{d}G^{(i)})\mathrm{det}\left(\mathbb{I}_2-G^{(i)}{G^{(i)}}^T\right)^{(L_i-3)/2}\delta(G-G^{(1)}\cdots G^{(m)}).
\end{multline}
\end{theorem}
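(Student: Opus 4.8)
\medskip
\noindent\textbf{Proof strategy.}
The plan is to follow, in the truncated-orthogonal setting, the route taken in \cite{FI16} for products of real Ginibre matrices --- itself an elaboration of the single-matrix Schur decomposition \eqref{1.6} and of the analysis in \cite{IK14} --- but to organise the computation so that the size-$1$ (real eigenvalue) and size-$2$ (complex pair) blocks remain separate throughout; keeping them on an equal footing is exactly what \cite{IK14} does and what we wish to avoid here. The first step is the simultaneous real Schur decomposition of the tuple $(X_1,\dots,X_m)$ used in \cite{IK14}: conditioned on the product having exactly $k$ real eigenvalues, for almost every tuple there are orthogonal matrices $O_0=O_m,O_1,\dots,O_{m-1}$ such that each $Y_i:=O_{i-1}^TX_iO_i$ is quasi-upper-triangular with a common block pattern of $k$ blocks of size $1$ followed by $(N-k)/2$ blocks of size $2$. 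Then $P_m=O_0(Y_1\cdots Y_m)O_0^T$ is quasi-upper-triangular with the same pattern, so the $j$-th real eigenvalue of $P_m$ equals the product $\lambda_j^{(1)}\cdots\lambda_j^{(m)}$ of the corresponding scalar diagonal entries of the $Y_i$, and the $j$-th complex pair $x_j\pm\mathrm{i}y_j$ is the eigenvalue pair of the $2\times2$ matrix $G_j:=G_j^{(1)}\cdots G_j^{(m)}$ formed from the corresponding $2\times2$ diagonal blocks. The convolution structures of \eqref{2.3} and of \eqref{2.7} are already visible at this point.

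Next I would change variables from $(X_1,\dots,X_m)$ to $(O_0;O_1,\dots,O_{m-1};\{\lambda_j^{(i)}\};\{G_j^{(i)}\};\{T_i\})$, where $T_i$ is the strictly-upper-triangular part of $Y_i$. As in \eqref{1.6} and in \cite{FI16}, the Jacobian factorises, producing a single Vandermonde $|\Delta(\{\lambda_l\}\cup\{x_j\pm\mathrm{i}y_j\})|$ in the eigenvalues of the product together with Haar measure on the $O_i$, Lebesgue measure on the $T_i$, and measures on the diagonal-block variables; the non-uniqueness of the decomposition yields the factor $1/(k!((N-k)/2)!)$, and the integral over $O_0$ contributes only an eigenvalue-independent constant. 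The central task is then to integrate $\prod_{i=1}^m\det(\mathbbm{1}-X_i^TX_i)^{(L_i-N-1)/2}=\prod_{i=1}^m\det(\mathbbm{1}-Y_i^TY_i)^{(L_i-N-1)/2}$ over $O_1,\dots,O_{m-1}$ and $T_1,\dots,T_m$, and to show that it factorises over the diagonal blocks of the product: a size-$1$ block contributes a function of $\lambda_j$ only, with the exponent shifting $(L_i-N-1)/2\mapsto L_i/2-1$ after the $N-1$ transverse integrations and equalling $w(\lambda_j^{(i)};L_i)$ of \eqref{2.4} once the Beta-integral $\int_0^1(1-t^2)^{(L-3)/2}\,dt=\tfrac12\Gamma(\tfrac12)\Gamma(\tfrac{L-1}2)/\Gamma(\tfrac L2)$ is used; a size-$2$ block contributes a function of $G_j$ only, with exponent $(L_i-N-1)/2\mapsto(L_i-3)/2$ after the $N-2$ transverse integrations, equalling $W(G_j)$ of \eqref{2.7}. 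The Gamma-function factors produced by these transverse integrations are precisely what turn the single-matrix normalisations $\prod_iC_{N,L_i}$ of \eqref{2.1} into the $\prod_iK_{N,L_i}$ of \eqref{2.2}.

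The last step is to bring the complex-pair weight to the form \eqref{2.5}. The key observation is that $W$ in \eqref{2.7} satisfies $W(O_LGO_R)=W(G)$ for orthogonal $O_L,O_R$ (absorb $O_L$ into $G_j^{(1)}$ and $O_R$ into $G_j^{(m)}$; the factors $\det(\mathbb{I}_2-G^{(i)}{G^{(i)}}^T)$ and the flat measures $(\mathrm{d}G^{(i)})$ are unchanged), so $W(G)$ depends on $G$ only through its singular values. Parametrising a real $2\times2$ matrix with complex eigenvalues $x\pm\mathrm{i}y$ by those eigenvalues, a shear parameter $\delta$ --- equivalently by its singular values $\mu_\pm=\tfrac12(\pm|\delta|+\sqrt{\delta^2+4(x^2+y^2)})$, so that $\mu_+\mu_-=x^2+y^2$ --- and an overall angle that is integrated out, the Jacobian is $\delta/\sqrt{\delta^2+4y^2}$, the contraction constraint $\mu_+<1$ becomes $0<\delta<1-x^2-y^2$, and one arrives at \eqref{2.5}--\eqref{2.6}. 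A reassuring check throughout is the case $m=1$: the $\delta$-functions in \eqref{2.3} and \eqref{2.7} collapse, and the whole computation reduces, again via the Beta-integral, to the single-matrix formula \eqref{1.10} of \cite{KSZ09}; equivalently, the argument can be phrased as an induction on $m$ with \eqref{1.10} as the base case and the ``one extra factor'' step carried out exactly as above.

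The step I expect to be the genuine obstacle is the per-block factorisation of $\int\prod_i\det(\mathbbm{1}-Y_i^TY_i)^{(L_i-N-1)/2}$ over the $O_i$ and the $T_i$. For Ginibre the weight $e^{-\mathrm{Tr}\,X^TX}$ is additive over matrix entries, so the triangular parts decouple from the diagonal blocks almost for free; here the determinant couples them, and showing that these integrals nonetheless collapse to a clean block-by-block product --- with exactly the exponent shifts and Gamma-factors above, and no leftover eigenvalue-dependent cross terms --- is the technical core of the proof.
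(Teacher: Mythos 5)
Your proposal is correct and follows essentially the same route as the paper: the generalised real Schur decomposition of the tuple $(X_1,\dots,X_m)$, the factorised Jacobian of \cite[Prop.~A.26]{Ip15} producing the single Vandermonde and the combinatorial prefactor, orthogonal invariance reducing $W$ to a function of the singular values, and the $\delta$-parametrisation of the $2\times2$ blocks with Jacobian proportional to $\delta/\sqrt{\delta^2+4y^2}$. The ``genuine obstacle'' you flag --- the block-by-block factorisation of the $T_i$- and $Q_i$-integrals of $\prod_i\det(\mathbbm{1}-X_i^TX_i)^{(L_i-N-1)/2}$ --- is dealt with in the paper exactly by the reduction you mention in passing: the structure of the intermediate expression shows that \eqref{2.2} holds for general $m$ provided it holds for $m=1$, and the $m=1$ case is settled by matching against the known single-matrix PDF \eqref{1.10} of \cite{KSZ09} (which amounts to verifying the elementary integral identity \eqref{2.14}), so the transverse integrations never need to be carried out afresh.
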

\begin{proof}
According to \eqref{2.1}, and with $C_{N,L_i}$ given by~\eqref{constant-trunc} (under the assumption that $N\leq L_i$) the joint probability measure for $\{X, X_1, \ldots, X_m\}$ is
\begin{equation}\label{2.8}
\delta(X-X_1\cdots X_m)\prod_{i=1}^mC_{N,L_i}\,\mathrm{det}\left(\mathbb{I}-X_i^TX_i\right)^{(L_i-N-1)/2}(\mathrm{d}X_i)(\mathrm{d}X).
\end{equation}
For the matrices $X_i$ we follow the strategy used in \cite{FK16, FI16}
and use a generalised real Schur decomposition
\begin{equation}\label{2.9}
X_i=Q_i(D_i^{(k)}+T_i)Q_{i+1}^{-1},\quad(i=1,\ldots,m),
\end{equation}
where $Q_{m+1}:=Q_1$. With $O^*(N)$ deformed to be the set of matrices in $O(N)$ with the first entry in each column positive, each $Q_i$ in \eqref{2.9} is a real orthogonal matrix in $O^*(N)/O^*(2)^{(N-k)/2}$. Each matrix $D_i^{(k)}$ is a block diagonal matrix with the first $k$ diagonal entries scalars $\{\lambda_1^{(i)},\ldots,\lambda_k^{(i)}\}$ and the next $(N-k)/2$ block diagonal entries $2\times2$ matrices $\{G_s^{(i)}\}_{s=1}^{(N-k)/2}$. The matrices $T_i$ are each strictly upper triangular.

Introduce the block diagonal product $D=D_1\cdots D_m$ and denote the first $k$ diagonal entries $\{\lambda_t:=\lambda_t^{(1)}\cdots\lambda_t^{(m)}\}_{t=1}^k$, and the latter $(N-k)/2$ block $2\times 2$ matrices $\{G_s:=G_s^{(1)}\cdots G_s^{(m)}\}_{s=1}^{(N-k)/2}$. We know that the Jacobian for the change of variables is then \cite[Prop. A.26]{Ip15}
\begin{multline}\label{2.10}
\prod_{l=1}^m(\mathrm{d}X_l)=\prod_{j<p}|\lambda(D_{pp})-\lambda(D_{jj})|\prod_{l=1}^m(\mathrm{d}T_l)(Q_l^T\mathrm{d}Q_l)\prod_{l=1}^m\left(\prod_{j=1}^k\mathrm{d}\lambda_j^{(l)}\prod_{s=1}^{(N-k)/2}\mathrm{d}G_s^{(l)}\right),
\end{multline}
which generalises \eqref{1.6}. 
It follows that the PDF for $\{\lambda_t\}_{t=1}^k\cup\{G_s\}_{s=1}^{(N-k)/2}$ is equal to
\begin{multline}\label{2.12}
\frac{\prod_{i=1}^mC_{N,L_i}}{k!\left((N-k)/2\right)!}\prod_{j<p}|\lambda(D_{pp})-\lambda(D_{jj})|\prod_{j=1}^k\int\mathrm{d}\lambda_j^{(1)}\cdots\mathrm{d}\lambda_j^{(m)}\delta(\lambda_j-\lambda_j^{(1)}\cdots\lambda_j^{(m)})
\\ \times\prod_{s=1}^{(N-k)/2}\int(\mathrm{d}G_s^{(1)})\cdots(\mathrm{d}G_s^{(m)})\delta(G_s-G_s^{(1)}\cdots G_s^{(m)})
\\ \times\prod_{i=1}^m\int(\mathrm{d}T_i)(Q_i^T\mathrm{d}Q_i)\det(\mathbb{I}-X_i^TX_i)^{(L_i-N-1)/2},
\end{multline}
where as in \eqref{1.7} and \eqref{1.10}, the combinatorial prefactor results from relaxing the ordering on the eigenvalues required to make \eqref{1.5} one-to-one.

From the definitions, we see that
\begin{equation}\label{2.13}
\prod_{j<p}|\lambda(D_{pp})-\lambda(D_{jj})|=\left|\Delta\left(\{\lambda_l\}_{l=1}^k\cup\{x_j\pm\mathrm{i}y_j\}_{j=1}^{(N-k)/2}\right)\right|\prod_{j=1}^{(N-k)/2}\frac{1}{2y_j}
\end{equation}
and we know too \cite{Ed97} that an orthogonal similarity transformation can be used to bring each $G_{\mu}$ into the form
\begin{equation*}
\begin{bmatrix}x_{\mu}&b_{\mu}\\-c_{\mu}&x_{\mu}\end{bmatrix},
\end{equation*}
with $b_{\mu},c_{\mu}>0$, showing that the eigenvalues are $x_{\mu}\pm\mathrm{i}y_{\mu}$ with $y_{\mu}^2=b_{\mu}c_{\mu}$. From this latter point, we may change variables from the elements of $G_{\mu}$ to $\{x_{\mu},y_{\mu},\delta_{\mu},\theta_{\mu}\}$, where $\delta_{\mu}=b_{\mu}-c_{\mu}$ and $\theta_{\mu}$ parametrises the orthogonal similarity transformation. Integrating out the latter, the Jacobian of the transformation is
\begin{equation*}
\frac{4\pi y_{\mu}|\delta_{\mu}|}{\sqrt{\delta_{\mu}^2+4y_{\mu}^2}}
\end{equation*}
(see e.g. \cite[Proof of Prop. 15.10.1 and Prop. 15.10.2]{Fo10}). Also, due to the left and right orthogonal invariance each matrix $G_s$ in \eqref{2.12} may be replaced by its singular values as given in \eqref{2.6}.

Taking into consideration the theory of the above paragraph, and noting in particular the structure of \eqref{2.12}, we see that \eqref{2.2} is true for general $m\geq1$ provided it is true for $m=1$. For $m=1$, comparison of \eqref{2.2} and \eqref{2.5} with \eqref{1.10} and \eqref{1.13} shows that the task is to verify that
\begin{equation}\label{2.14}
\int_0^{1-x^2-y^2}\frac{\delta((1-\mu_{+}^2)(1-\mu_{-}^2))^{(L-3)/2}}{\sqrt{\delta^2+4y^2}}\mathrm{d}\delta
=|1-z^2|^{L-2}\int_{2|\mathrm{Im}\,z|/|1-z^2|}^1(1-t^2)^{(L-3)/2}\mathrm{d}t,
\end{equation}
where $\mu_{\pm}$ is given by \eqref{2.6}. From the latter we can check
\begin{equation*}
(1-\mu_{+}^2)(1-\mu_{-}^2)=(1-x^2-y^2)-\delta^2.
\end{equation*}
Also, changing variables $s=(\delta^2+4y^2)^{1/2}$ the integral on the LHS reads
\begin{equation*}
\int_{2y}^{\left((1-x^2-y^2)^2+4y^2\right)^{1/2}}\left((1-x^2-y^2)^2+4y^2-s^2\right)^{(L-3)/2}\mathrm{d}s.
\end{equation*}
Setting $s=\left((1-x^2-y^2)^2+4y^2\right)^{1/2}t=|1-z^2|t$, this is seen to equal the RHS.
\end{proof}

\begin{remark}
The equation (\ref{2.8}) is not valid for parameters $L_i < N$ since the density function for $X_i$ is then singular.
To proceed, following \cite{KSZ09,Fi12}, let $Y_i$ be the $L_i \times N$ rectangular matrix, which when appended to
the bottom of $X_i$ gives the first $N$ columns of the $(L_i+N) \times (L_i+N)$ real orthogonal matrix. One then
has that the joint distribution of $\{X_i, Y_i\}$ is given by the distribution
\begin{equation}\label{sXY}
\tilde{c} \delta (X_i^T X_i + Y_i^T Y_i  - I_N)
\end{equation}
where the normalisation $\tilde{c}$ is given by \cite[Eq.~(2.0.15)]{Fi12}. The calculation is thus more complicated
due to the involvement of the auxiliary variables implied by $Y_i$. For the case $m=1$, all the required working
is given in \cite[\S 4.2.2]{Fi12}. But as in our proof above for the cases $L_i \ge N$, the structure of the
analogue of (\ref{2.12}), obtained by replacing $\det(\mathbb I - X_i^T X_i)^{(L_i - N - 1)/2}$ therein by
(\ref{sXY}), and further integrating over $Y_i$, we see that again (\ref{2.3}) is true for $m \ge 1$
conditional only on it being true for $m=1$.
\end{remark}

\section{Probability of \textit{k} real eigenvalues}\label{s3}

\subsection{The generating function as a Pfaffian}\label{s3.1}

Use $Q_{N,k}(\{\lambda_l\}_{l=1}^k,\{x_j\pm\mathrm{i}y_j\}_{j=1}^{(N-k)/2})$
to denote the PDF \eqref{2.2}.
The probability $p_{N,k}^{P_m}$ of there being precisely $k$ real eigenvalues ($k$ the same parity as $N$) is then
\begin{equation}\label{3.1}
p_{N,k}^{P_m}=\prod_{l=1}^k\int_{-1}^{+1}\mathrm{d}\lambda_l\prod_{j=1}^{(N-k)/2}\int_{D_{+}}\mathrm{d}x_j\mathrm{d}y_j Q_{N,k}\left(\{\lambda_l\}_{l=1}^k,\{x_j\pm\mathrm{i}y_j\}_{j=1}^{(N-k)/2}\right),
\end{equation}
where $D_+=\{(x,y)\,:\,x^2+y^2<1 \text{ and } y>0\}$ denotes the half unit disk.
A fundamental feature of \eqref{3.1}, which follows from the structure of \eqref{2.2}, is that the corresponding generating function
\begin{equation}\label{3.2}
Z_N(\zeta)=\sum_{\substack{k=0\\ k\equiv N\mod 2}}^N\zeta^kp_{N,k}^{P_m},
\end{equation}
can be written as a Pfaffian. This was observed by Sinclair in the case $\zeta=1$, and the details of the necessary working can be found in \cite[Prop. 15.10.3, $N$ even]{Fo10} and \cite[\S4.3.1 $N$ even and \S4.3.2 $N$ odd]{Ma11}. The final result is reported in \cite[Prop. 5]{FI16}. We repeat it here, allowing for minor changes in notation.

\begin{proposition}\label{prop1}
Let $\{p_{l-1}(x)\}_{l=1,\ldots,N}$ be a set of monic polynomials, with $p_{l-1}(x)$ of degree $l-1$. Let
\begin{align}
\alpha_{j,k}&=\int_{-1}^{1}\mathrm{d}x\,w_r^{(m)}(x)\int_{-1}^{1}\mathrm{d}y\,w_r^{(m)}(y)p_{j-1}(x)p_{k-1}(y)\mathrm{sgn}(y-x),\nonumber
\\ \beta_{j,k}&=2\mathrm{i}\int_{D_{+}}\mathrm{d}x\mathrm{d}y\,w_c^{(m)}(x,y)\left(p_{j-1}(x+\mathrm{i}y)p_{k-1}(x-\mathrm{i}y)-p_{k-1}(x+\mathrm{i}y)p_{j-1}(x-\mathrm{i}y)\right),\label{3.3}
\end{align}
and
\begin{equation}\label{3.4}
\mu_k=\int_{-1}^{1}w_r^{(m)}(x)p_{k-1}(x)\mathrm{d}x.
\end{equation}
For $N$ even
\begin{equation}\label{3.5}
Z_N(\zeta)=\left(\prod_{i=1}^mK_{N,L_i}\right)\mathrm{Pf}\left[\zeta^2\alpha_{j,l}+\beta_{j,l}\right]_{j,l=1,\ldots,N}
\end{equation}
while for $N$ odd
\begin{equation}\label{3.6}
Z_N(\zeta)=\zeta\left(\prod_{i=1}^mK_{N,L_i}\right)\mathrm{Pf}\begin{bmatrix}{[}\zeta^2\alpha_{j,l}+\beta_{j,l}]&[\mu_j]\\{[}-\mu_l]&0\end{bmatrix}_{j,l=1,\ldots,N}.
\end{equation}
\end{proposition}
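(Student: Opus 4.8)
The plan is to derive the Pfaffian structure directly from the eigenvalue PDF \eqref{2.2}, following the classical method of integration over alternating ensembles. First I would substitute \eqref{2.2} into \eqref{3.1}, then multiply \eqref{3.1} by $\zeta^k$ and sum over $k$ of the correct parity to form $Z_N(\zeta)$; the key observation is that the factor $\zeta^k$ distributes as $\zeta$ per real eigenvalue, which combined with the combinatorial prefactor $1/(k!((N-k)/2)!)$ and the sum over $k$ allows the standard \emph{de Bruijn integration lemma} to be applied. The Vandermonde factor $|\Delta(\{\lambda_l\}\cup\{x_j\pm\mathrm{i}y_j\})|$ must first be rewritten as an $N\times N$ determinant using the monic polynomials $p_{l-1}$ (replacing the power basis, which leaves the determinant unchanged); the absolute value is handled by the usual device of imposing an ordering on the real eigenvalues and compensating with the sign factor $\mathrm{sgn}(y-x)$ that appears in $\alpha_{j,k}$, the ordering on the complex eigenvalues being automatic from restricting to $y_j>0$.

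Next I would invoke de Bruijn's formula: an integral of the form $\int \det[\phi_j(\xi_l)\,|\,\psi_j(\xi_l)]_{j;l}\,\prod d\mu(\xi_l)$ over pairs, together with a sum over how the $N$ indices split into $k$ "real-type" singletons and $(N-k)/2$ "complex-type" pairs, collapses to a single Pfaffian $\mathrm{Pf}[A_{j,l}]$ whose entries are the $2\times 2$ building blocks integrated against the appropriate measures. Here the real-type contribution to entry $(j,l)$ produces exactly $\zeta^2\alpha_{j,l}$ — the two powers of $\zeta$ coming from the two real eigenvalues in the antisymmetrised pair — while the complex-type contribution produces $\beta_{j,l}$, with the factor $2\mathrm{i}$ absorbing the Jacobian $1/(2y_j)$ from \eqref{2.13} together with the antisymmetrisation $p_{j-1}(x+\mathrm{i}y)p_{k-1}(x-\mathrm{i}y)-(j\leftrightarrow k)$. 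This gives \eqref{3.5} for $N$ even. For $N$ odd the determinant has odd size, so before applying de Bruijn one borders the matrix with an extra row and column of the moments $\mu_j$ (and a zero corner) — equivalently, one of the real eigenvalues is left unpaired and contributes the linear functional \eqref{3.4} rather than a quadratic one; the overall prefactor $\zeta$ in \eqref{3.6} records that unpaired real eigenvalue.

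The bookkeeping I would be most careful about is the precise matching of combinatorial and Jacobian constants: verifying that the $1/(k!((N-k)/2)!)$ in \eqref{2.2}, the ordering-relaxation factors, the $\prod 1/(2y_j)$ from \eqref{2.13}, the factor $2$ multiplying $(w_c^{(m)})$ versus the single $w_c^{(m)}$ in \eqref{2.2} versus \eqref{1.10}, and the $2\mathrm{i}$ in the definition of $\beta_{j,l}$ all combine so that the only surviving constant is $\prod_{i=1}^m K_{N,L_i}$. This is routine but error-prone, and is precisely the "minor changes in notation" the text alludes to; since the $\zeta=1$ case is already established in \cite{Fo10,Ma11} and the final statement is quoted from \cite[Prop.~5]{FI16}, the cleanest exposition is simply to observe that introducing $\zeta$ per real eigenvalue modifies the de Bruijn computation only by sending $\alpha_{j,l}\mapsto\zeta^2\alpha_{j,l}$ (and, for $N$ odd, $\mu_j\mapsto\zeta\mu_j$ with the global $\zeta$ pulled out), leaving $\beta_{j,l}$ untouched because complex eigenvalues carry no power of $\zeta$. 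The main obstacle, such as it is, is therefore not conceptual but ensuring the parity-dependent bordering in the $N$ odd case is set up so that the Pfaffian of the bordered antisymmetric matrix genuinely reproduces the $k$-sum with the correct signs.
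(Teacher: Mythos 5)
Your proposal is correct and is essentially the route the paper itself takes: the paper gives no independent proof of this proposition, delegating it to Sinclair's method as worked out in \cite[Prop.~15.10.3]{Fo10} and \cite[\S 4.3]{Ma11} (and quoted from \cite[Prop.~5]{FI16}), and your de Bruijn-based derivation --- Vandermonde as a determinant in the $p_{l-1}$ basis, ordering plus $\mathrm{sgn}$ for the real eigenvalues, conjugate-pair $2\times 2$ minors for the complex ones, with $\zeta$ attached to each real eigenvalue so that real pairs give $\zeta^{2}\alpha_{j,l}$, complex pairs give $\beta_{j,l}$ untouched, and the $N$ odd case is handled by bordering with $\zeta\mu_{j}$ and pulling the single $\zeta$ out front --- is exactly that computation. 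One small bookkeeping correction: the $2\mathrm{i}$ in $\beta_{j,l}$ does not absorb the Jacobian $\prod 1/(2y_{j})$ of \eqref{2.13}, since that factor is already folded into the passage from \eqref{2.12} to the stated PDF \eqref{2.2}; rather it arises because the conjugate-pair minor $p_{j-1}(x+\mathrm{i}y)p_{k-1}(x-\mathrm{i}y)-p_{k-1}(x+\mathrm{i}y)p_{j-1}(x-\mathrm{i}y)$ is purely imaginary and the sign conventions hidden in $|\Delta|$ must be unwound to leave a real entry.
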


\subsection{Skew orthogonal polynomials}\label{s3.2}
With $w_r^{(m)}$ given by \eqref{2.3}, the double integral defining $\alpha_{j,k}$ can be computed in terms of a particular Meijer G-function \cite{FK16}; see \eqref{3.14} below. On the other hand, a direct computation of $\beta_{j,k}$ does not appear possible; recall the definition \eqref{2.5} of $w_c^{(m)}$ therein. Fortunately, an indirect evaluation is possible, provided the monic polynomials $\{p_{l-1}(x)\}$ are appropriately chosen. This follows from the fact that by choosing \cite[Remark 7]{FI16}
\begin{equation}\label{3.7}
p_{2n}(z)=z^{2n},\quad p_{2n+1}(z)=z^{2n+1}-\left\langle\mathrm{Tr}\,P_m^2\right\rangle_{2n\times2n}z^{2n-1}
\end{equation}
the matrix $[\zeta^2 \alpha_{j,l} + \beta_{j,l} ]$ in \eqref{3.5} and \eqref{3.6} in that case $\zeta=1$ becomes block diagonal, with blocks
\begin{equation}\label{3.8}
\begin{bmatrix}0&h_{j-1}\\-h_{j-1}&0\end{bmatrix},\quad h_{j-1}=\alpha_{2j-1,2j}+\beta_{2j-1,2j},
\end{equation}
($j=1,\ldots,[N/2]$) and the last diagonal entry $0$ for $N$ odd. The choice \eqref{3.7} specifies $\{p_{l-1}(x)\}$ as skew orthogonal polynomials. This structure is the key in progressing from the Pfaffian expressions to the computation of the probabilities $p_{N,k}^{P_m}$, and as we will see later, the correlation functions.

\begin{lemma}
Consider the product \eqref{1.9}. Let each $X_i$ be a $2n\times2n$ sub-block of an $(L_i+2n)\times(L_i+2n)$ real orthogonal matrix chosen with Haar measure. We have
\begin{equation}\label{3.9}
\left\langle\mathrm{Tr}\,P_m^2\right\rangle_{2n\times2n}=\prod_{i=1}^m\frac{2n}{L_i+2n}
\end{equation}
and consequently the skew orthogonal polynomials \eqref{3.7} read
\begin{equation}\label{3.10}
p_{2n}(z)=z^{2n},\quad p_{2n+1}(z)=z^{2n+1}-\left(\prod_{i=1}^m\frac{2n}{L_i+2n}\right)z^{2n-1}.
\end{equation}
Furthermore, the normalisation in \eqref{3.8} is given by
\begin{equation}\label{3.11}
h_l=\prod_{i=1}^m\frac{L_i!(2l)!}{(L_i+2l)!}.
\end{equation}
\end{lemma}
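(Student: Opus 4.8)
To prove the Lemma I would split the work into two essentially independent computations: the value of $\langle \tr P_m^2\rangle_{2n\times 2n}$, which feeds into the odd skew orthogonal polynomial $p_{2n+1}$, and the normalisation constant $h_l = \alpha_{2l-1,2l}+\beta_{2l-1,2l}$. For the first, the key observation is multiplicativity under independence: since the $X_i$ are independent and each is the truncation of a Haar orthogonal matrix, $\langle \tr(X_1\cdots X_m X_m^T\cdots X_1^T)\rangle$ does \emph{not} immediately factor (that is the trace of $P_mP_m^T$, not $P_m^2$), so instead I would use that $\langle \tr P_m^2\rangle = \sum_{i,j}\langle (P_m)_{ij}(P_m)_{ji}\rangle$ and exploit the fact that for a single truncated orthogonal block $\langle (X)_{ab}(X)_{cd}\rangle = \frac{1}{L+2n}\,\delta_{ac}\delta_{bd}$ — the standard second moment of entries of a sub-block of a Haar orthogonal matrix of size $(L+2n)$. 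Expanding $(P_m)_{ij} = \sum X^{(1)}_{i a_1}X^{(2)}_{a_1 a_2}\cdots X^{(m)}_{a_{m-1} j}$ and using independence to factor the expectation over the $m$ factors, each factor contributes a Kronecker-delta pairing; summing the resulting chain of deltas over all internal indices and over $i,j$ gives exactly $\prod_{i=1}^m \frac{2n}{L_i+2n}$. This also matches the known $m=1$ truncated-orthogonal result and the product-structure one expects from \cite{FK16}. Plugging this into \eqref{3.7} yields \eqref{3.10} immediately.

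For the normalisation $h_l$, the efficient route is \emph{not} to evaluate $\beta_{2l-1,2l}$ directly (as the paper itself notes, the complex weight $w_c^{(m)}$ is intractable) but to use the skew-orthogonality structure: with the polynomials chosen as in \eqref{3.7}, $h_l$ equals the Pfaffian normalisation of the two-component process, and by the general theory (see the reduction leading to \eqref{3.8}) it can be read off from the $\zeta=1$ case, where $Z_N(1)=\prod_i K_{N,L_i}\,\mathrm{Pf}[\alpha_{j,l}+\beta_{j,l}] = \prod_i K_{N,L_i}\prod_{j=1}^{N/2} h_{j-1}$, and $Z_N(1)=1$ by normalisation of the PDF. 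That relation alone does not isolate individual $h_l$, so I would instead compute $h_l$ via an auxiliary $2l\times 2l$ problem: take $N=2l+2$ (or use the standard telescoping $Z_{2n}(1)/Z_{2n-2}(1)$-type argument) so that $h_{l}$ appears as the ratio of consecutive partition functions, and use the explicit $K_{N,L_i}$ from \eqref{1.12}. Carrying this through, $\prod_{j=1}^{n} h_{j-1} = \big(\prod_i K_{2n,L_i}\big)^{-1}$, and taking ratios gives $h_{l} = \prod_i \frac{K_{2l,L_i}}{K_{2l+2,L_i}}$; substituting \eqref{1.12} and simplifying the ratio of Gamma-function products telescopes to $\prod_{i=1}^m \frac{L_i!\,(2l)!}{(L_i+2l)!}$, which is \eqref{3.11}. (One should double-check the bookkeeping of which index range of $j$ in $\prod \Gamma(\tfrac{L_i+j}{2})/\Gamma(\tfrac{j}{2})$ is affected when $N$ increases by $2$; only the top two factors change, producing exactly the claimed $(2l)!/(L_i+2l)!$ after using $\Gamma(x+1)=x\Gamma(x)$ repeatedly.)

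\textbf{The main obstacle.} The moment computation $\langle \tr P_m^2\rangle$ is the part requiring genuine care: one must be sure that the relevant second moment of a truncated-orthogonal block is $\frac{1}{L+2n}\delta_{ac}\delta_{bd}$ with \emph{no} $\delta_{ab}\delta_{cd}$ or $\delta_{ad}\delta_{bc}$ terms — this holds because for a Haar orthogonal matrix $O$ of size $p$ one has $\langle O_{\alpha\beta}O_{\gamma\delta}\rangle = \frac{1}{p}\delta_{\alpha\gamma}\delta_{\beta\delta}$, and restricting $\alpha,\gamma$ to the first $2n$ rows and $\beta,\delta$ to the first $2n$ columns leaves precisely this single term with $p=L+2n$. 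Given that, the index-chasing through the $m$-fold product is routine (each internal summation index is forced, and the free sum over $i$ and over each $a_r$ contributes a factor $2n$, while each factor's normalisation contributes $\frac{1}{L_r+2n}$). Everything else is algebraic simplification of Gamma-function ratios, which is tedious but not conceptually hard; the one place to be vigilant is matching the combinatorial prefactors and the parity conventions when passing between the $N$-even Pfaffian \eqref{3.5} and the auxiliary $(2l)\times(2l)$ and $(2l+2)\times(2l+2)$ instances used to extract $h_l$.
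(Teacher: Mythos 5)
Your proposal is correct and follows essentially the same route as the paper: the trace moment is computed from the entry-wise second moments $\langle O_{\alpha\beta}O_{\gamma\delta}\rangle=\tfrac{1}{p}\delta_{\alpha\gamma}\delta_{\beta\delta}$ of a Haar orthogonal matrix (the paper phrases the resulting index-chasing as ``only squares on the diagonal survive''), and $h_l$ is extracted from the normalisation identity $Z_N(1)=1$ together with the explicit $K_{N,L_i}$ of \eqref{1.12}, using that $h_l$ is independent of $N$ so the product relation telescopes. The only refinement worth recording is your explicit justification that the second moment contains no $\delta_{\alpha\delta}\delta_{\beta\gamma}$ or $\delta_{\alpha\beta}\delta_{\gamma\delta}$ terms, which the paper leaves implicit.
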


\begin{proof}
Averaging over individual elements $(X_i)_{jk}$ of each $X_i$ gives zero, while
\begin{equation}\label{3.12}
\left\langle\left((X_i)_{jk}\right)^2\right\rangle=\frac{1}{L_i+2n}.
\end{equation}
This latter fact follows from each element of $X_i$ being an element of a $(L_i+2n)\times(L_i+2n)$ Haar distributed real orthogonal matrix, and knowledge of the distribution of the moments of the latter; see e.g. \cite[Eq. (5.2) with $p=1$]{DF17}. The only time no individual terms $(X_i)_{jk}$ appear in $P_m^2$ is on the diagonal, so we have
\begin{equation*}
\left\langle\mathrm{Tr}\,P_m^2\right\rangle=\sum_{l=1}^{2n}\left\langle\left((P_m)_{ll}\right)^2\right\rangle_{X_1,\ldots,X_m}.
\end{equation*}
Now $(P_m)_{ll}$ consists of a sum of a total of $(2n)^{m-1}$ terms, each of which is a product of $m$ elements, one from each of $X_1,\ldots,X_m$. Only the square of each of these terms is non-zero in the averaging. Using \eqref{3.12} shows
\begin{equation*}
\left\langle\mathrm{Tr}\,P_m^2\right\rangle=(2n)^{m-1}\sum_{l=1}^{2n}\prod_{i=1}^m\frac{1}{L_i+2n},
\end{equation*}
which is \eqref{3.9}. Substituting in \eqref{3.7} gives \eqref{3.10}.

In relation to the normalisation, from the definition \eqref{3.2} we must have $Z_N(1)=1$. We know that use of the skew orthogonal polynomials reduces $[\alpha_{j,l}+\beta_{j,l}]_{j,l=1,\ldots,N}$ to block diagonal form with blocks \eqref{3.8}. Use of \eqref{3.5} then gives that for $N$ even
\begin{equation*}
1=\left(\prod_{i=1}^mK_{N,L_i}\right)\prod_{l=1}^{N/2}h_{l-1}.
\end{equation*}
Recalling \eqref{1.12}, this implies \eqref{3.11}.
\end{proof}

\subsection{Determinant formula}\label{s3.3}
The skew orthogonal polynomials \eqref{3.10} are even and odd when their degrees are even and odd respectively. We can check from \eqref{3.3} that this implies $\alpha_{j,l}+\beta_{j,l}=0$ unless the parity of $j$ and $l$ is opposite. The elements in the Pfaffian are thus vanishing in a chequerboard pattern, which allows for a reduction to a determinantal formula of a matrix with half the size as familiar from earlier studies \cite{FN07, FN08p, FI16}. For $N$ even, we have
\begin{equation}\label{3.13}
Z_N(\zeta)=\left(\prod_{i=1}^m K_{N,L_i}\right)\mathrm{det}\left[\zeta^2\alpha_{2j-1,2l}+\beta_{2j-1,2l}\right]_{j,l=1,\ldots,N/2}
\end{equation}
and for $N$ odd
\begin{equation}\label{3.14}
Z_N(\zeta)=\zeta\left(\prod_{i=1}^mK_{N,L_i}\right)\mathrm{det}\left[\left[\zeta^2\alpha_{2j-1,2l}+\beta_{2j-1,2l}\right]\left[\mu_{2j-1}\right]\right]_{\substack{j=1,\ldots,(N+1)/2,\\ l=1,\ldots,(N-1)/2}}.
\end{equation}

Let us denote by  $a_{j,k}$ the corresponding integral in~\eqref{3.3} with $p_l(x)=x^l$. Then with $G_{2m+1,2m+1}^{m+1,m}$ denoting a particular Meijer G-function (see e.g. \cite{Lu69}) we know from \cite[Eq. (2.14)]{FK16} that
\begin{multline}\label{3.15}
a_{2j-1,2k}=\alpha_{2j-1,2k}\Big\vert_{p_l(x)=x^l}
=\bigg(\prod_{\ell=1}^m\frac{L_\ell\Gamma(\frac{L_\ell}{2})\Gamma(\frac{L_\ell+1}{2})}{2\sqrt{\pi}}\bigg)\\
\times \MeijerG[\bigg]{m+1}{m}{2m+1}{2m+1}{\frac32-j,\ldots,\frac32-j;\frac{L_1}2+k,\ldots,\frac{L_m}2+k,1}%
{0,k,\ldots,k;\,\frac{3-L_1}2-j,\ldots,\frac{3-L_m}2-j}{1}.
\end{multline}
It follows that $\alpha_{2j-1,2l}$ in \eqref{3.13} and \eqref{3.14} evaluated using the skew orthogonal polynomials~\eqref{3.10} is given as a simple linear combination of $a_{2j-1,2k}$ and $a_{2j-1,2k-2}$, and thus is known explicitly in terms of Meijer G-functions. Moreover, use of the formula for $h_{j-1}$ in \eqref{3.8} together with \eqref{3.11} and the skew orthogonality 
\begin{equation*}
\alpha_{2j-1,2j}+\beta_{2j-1,2l}=0\quad(j\neq l)
\end{equation*}
allows $\beta_{2j-1,2l}$ to be eliminated, and we know too from \cite[Eq. (2.15)]{FK16} that with $p_{2j-2}(x)=x^{2j-2}$,
\begin{equation}\label{3.16}
\mu_{2j-1}=\prod_{\ell=1}^m\bigg(\frac{L_\ell\Gamma(\frac{L_\ell}{2})\Gamma(\frac{L_\ell+1}{2})}{2\sqrt{\pi}}\bigg)^{1/2}
\frac{\Gamma(j-\frac12)}{\Gamma(\frac{L_\ell}2+j-\frac12)}.
\end{equation}
As a consequence all entries in the determinant formulas \eqref{3.13} and \eqref{3.14} can be made explicit.

\begin{theorem}\label{theorem2}
Let $K_{N,L_i}$ be given by \eqref{1.12}, $a_{2j-1,2k}$ by \eqref{3.15}, $\mu_{2j-1}$ by \eqref{3.16} and $h_{j-1}$ by \eqref{3.11}. Setting
\begin{equation}
b_{j,k}(\zeta):=(\zeta^2-1)\left(a_{2j-1,2k}-\left(\prod_{i=1}^m\frac{2k-2}{L_i+2k-2}\right)a_{2j-1,2k-2}\right)+h_{j-1}\delta_{j,k},
\end{equation}
where $a_{\bullet,-2}=0$, we have for $N$ even that
\begin{equation}\label{gen-func-even}
Z_N(\zeta)=\left(\prod_{i=1}^mK_{N,L_i}\right)\mathrm{det}[b_{j,k}(\zeta)]_{j,k=1,\ldots,N/2}
\end{equation}
while for $N$ odd
\begin{equation}\label{gen-func-odd}
Z_N(\zeta)=\left(\prod_{i=1}^mK_{N,L_i}\right)\mathrm{det}\left[[b_{j,k}(\zeta)]_{\substack{j=1,\ldots,(N+1)/2\\ k=1,\ldots,(N-1)/2}} \quad [\mu_{2j-1}]_{j=1,\ldots,(N+1)/2}\right].
\end{equation}
\end{theorem}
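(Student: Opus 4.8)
The plan is to start from the determinantal reductions \eqref{3.13} and \eqref{3.14} — already available, being the chequerboard reduction of the Pfaffians \eqref{3.5} and \eqref{3.6} — and to make every matrix entry explicit. Everything then comes down to the single identity
$$\zeta^2\alpha_{2j-1,2l}+\beta_{2j-1,2l}=b_{j,l}(\zeta),$$
with $b_{j,l}(\zeta)$ as in the statement; granting this, \eqref{3.13} turns into \eqref{gen-func-even} and \eqref{3.14} into \eqref{gen-func-odd}, the remaining quantities $a_{2j-1,2k}$, $\mu_{2j-1}$ and $h_{j-1}$ being furnished by \eqref{3.15}, \eqref{3.16} and \eqref{3.11}.

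To prove the identity I would first evaluate $\alpha_{2j-1,2l}$ on the skew orthogonal polynomials \eqref{3.10}. Since $\alpha_{j,k}$ is built from $p_{j-1}$ and $p_{k-1}$, the entry $\alpha_{2j-1,2l}$ pairs $p_{2j-2}$ with $p_{2l-1}$; by \eqref{3.10} the former is the bare monomial $z^{2j-2}$ and the latter is $z^{2l-1}-\bigl(\prod_{i=1}^m\tfrac{2l-2}{L_i+2l-2}\bigr)z^{2l-3}$, so bilinearity of $\alpha$ in its second argument gives
$$\alpha_{2j-1,2l}=a_{2j-1,2l}-\Bigl(\prod_{i=1}^m\frac{2l-2}{L_i+2l-2}\Bigr)a_{2j-1,2l-2},$$
where $a_{p,q}$ is the value of $\alpha_{p,q}$ on monomials; the coefficient vanishes at $l=1$, so the convention $a_{\bullet,-2}=0$ is immaterial. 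Next I would remove $\beta$ using skew orthogonality: by construction of \eqref{3.10} the reduced matrix $[\alpha_{2j-1,2l}+\beta_{2j-1,2l}]$ is diagonal with diagonal entries $h_{j-1}$ — this is precisely the block structure \eqref{3.8} — hence $\alpha_{2j-1,2l}+\beta_{2j-1,2l}=h_{j-1}\delta_{j,l}$. Subtracting, $\zeta^2\alpha_{2j-1,2l}+\beta_{2j-1,2l}=(\zeta^2-1)\alpha_{2j-1,2l}+h_{j-1}\delta_{j,l}$, and inserting the formula for $\alpha_{2j-1,2l}$ reproduces $b_{j,l}(\zeta)$ verbatim. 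Substituting into \eqref{3.13} and \eqref{3.14} — in the odd case the border column $[\mu_{2j-1}]$ being already explicit via \eqref{3.16} — finishes the argument.

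I do not expect a genuine obstacle: the substantive inputs (the Pfaffian-to-determinant reduction and the Meijer $G$-function evaluation of $a_{2j-1,2k}$) are established earlier or quoted from \cite{FK16}, and what remains is assembly. The only place demanding care is the parity bookkeeping — that $p_{2j-2}$ sits in the ``$2j-1$'' slot and $p_{2l-1}$ in the ``$2l$'' slot, so the subleading monomial of $p_{2l-1}$ lands in the ``$2l-2$'' slot and only $a_{2j-1,2k}$ and $a_{2j-1,2k-2}$ enter — together with the $l=1$ edge case and, for $N$ odd, the observation that the $\zeta$-dressing modifies only the $b_{j,k}(\zeta)$ block of the bordered determinant and leaves the last column $[\mu_{2j-1}]$ unchanged.
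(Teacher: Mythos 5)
Your proposal is correct and takes essentially the same route as the paper, which proves Theorem~\ref{theorem2} by exactly this assembly: the chequerboard reduction to \eqref{3.13}--\eqref{3.14}, the bilinear expansion $\alpha_{2j-1,2l}=a_{2j-1,2l}-\bigl(\prod_{i=1}^m\tfrac{2l-2}{L_i+2l-2}\bigr)a_{2j-1,2l-2}$ coming from the skew orthogonal polynomials \eqref{3.10}, and the elimination of $\beta_{2j-1,2l}$ through the skew orthogonality relation $\alpha_{2j-1,2l}+\beta_{2j-1,2l}=h_{j-1}\delta_{j,l}$ of \eqref{3.8} and \eqref{3.11}, so that $\zeta^2\alpha_{2j-1,2l}+\beta_{2j-1,2l}=(\zeta^2-1)\alpha_{2j-1,2l}+h_{j-1}\delta_{j,l}=b_{j,l}(\zeta)$. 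The only point worth flagging is that direct substitution into \eqref{3.14} carries along the overall prefactor $\zeta$ for $N$ odd, which does not appear in \eqref{gen-func-odd} as printed; this is a discrepancy in the theorem statement itself rather than in your argument.
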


The importance of the explicit formulae for the generating functions~\eqref{gen-func-even} and~\eqref{gen-func-odd} provided by Theorem~\ref{theorem2} is evident from~\eqref{3.2}; we can find the probability of finding exactly $k$ real eigenvalues by expanding the generating function~\eqref{gen-func-even} if $N$ even and~\eqref{gen-func-odd} otherwise. This approach is remarkably general as it is valid for any number of matrices $m\geq1$, any matrix dimension $N\geq1$, and any truncations $L_1,\ldots,L_m\geq0$. Numerical computations of these probabilities (using mathematical software such as \textsc{Mathematica} or \textsc{Maple}) is relatively fast for moderate $m$. Nonetheless, it is interesting to look for evaluations of the Meijer $G$-function in~\eqref{3.15} in terms of more elementary functions. Let us first consider the simplest case, that is the Meijer $G$-function in~\eqref{3.15} with $m=1$. Computer algebra yields
\begin{multline}\label{eval-meijerG}
\MeijerG[\bigg]{2}{1}{3}{3}{\frac32-j;\frac{L_1}2+k,1}{0,k;\,\frac{3-L_1}2-j}{1}
=\frac{\Gamma(k)}{\Gamma(k+\frac{L_1}{2})}\frac{\Gamma(j-\frac12)}{\Gamma(j-\frac12+\frac{L_1}{2})}\\
-\frac1{k\,\Gamma(\frac{L_1}2)}\frac{\Gamma(j+k-\frac12)}{\Gamma(j+k-\frac12+\frac{L_1}{2})}
\hypergeometric{3}{2}{k,j+k-\frac12,1-\frac{L_1}2}{k+1,j+k-\frac12+\frac{L_1}2}{1},
\end{multline}
where
\begin{equation}\label{hyper}
\hypergeometric{3}{2}{a_1,a_2,a_3}{b_1,b_2}{x}:=
\sum_{\ell=0}^\infty\frac{(a_1)_\ell(a_2)_\ell(a_3)_\ell}{(b_1)_\ell(b_2)_\ell}\frac{x^\ell}{\ell!}
\end{equation}
is a hypergeometric sum. The important observation is that $1-\frac{L_1}2$ appears as an upper-index in the hypergeometric function in~\eqref{eval-meijerG}. Thus, if $L_1$ is a positive even integer then the upper-index $1-\frac{L_1}2$ is a negative integer which implies that that hypergeometric sum~\eqref{hyper} terminates. Consequently, the Meijer $G$-function~\eqref{eval-meijerG} becomes a finite sum over ratios of gamma functions. More precisely, we have~\cite[eq.~(3.2)]{FK16}
\begin{equation}\label{gamma-sum}
 \MeijerG[\bigg]{2}{1}{3}{3}{\frac32-j;\frac{L_1}2+k,1}{0,k;\,\frac{3-L_1}2-j}{1}=
 \frac{\Gamma(j-\frac12)}{\Gamma(\frac{L_1}2)\Gamma(L_1+j+k-\frac32)}\sum_{\ell=1}^{L_1/2}
 \frac{\Gamma(j+k+\ell-\frac32)\Gamma(L_1-\ell)}{\Gamma(j+\ell-\frac12)\Gamma(\frac{L_1}2-\ell+1)}
\end{equation}
for $L_1$ even. Furthermore, we see that the right-hand in~\eqref{gamma-sum} is a rational number for any $j,k,\frac{L_1}2\in\Z_+$. Using this result in Theorem~\ref{theorem2} leads to the conclusion that all probabilities $p_{N,k}^{P_1}$ are rational numbers as long as $L_1$ is an even integer. In fact, it turns out that this property is even more general: the probabilities $p_{N,k}^{P_m}$ for any $m$ is a rational number as long as $L_1,\ldots,L_m$ are even integers. This can be seen using the method presented in~\cite[\textsection 3]{FK16} inspired by a related technique used for Gaussian matrices~\cite{Ku15}. The main idea behind this method is to use the general three-term recurrence relation for Meijer $G$-functions~\cite{Lu69}
\begin{equation}
\MeijerG{m}{n}{p}{q}{a_1,\ldots,a_p}{b_1,\ldots,b_q}{z}=
\frac{\MeijerG{m}{n}{p}{q}{a_1,\ldots,a_{p-1},a_p-1}{b_1,\ldots,b_q}{z}+
\MeijerG{m}{n}{p}{q}{a_1,\ldots,a_p}{b_1,\ldots,b_{q-1},b_q+1}{z}}{a_p-b_q-1}
\end{equation}
for $n<p$ and $m<q$; together with evaluations~\cite{Lu69,FK16}
\begin{align}
\MeijerG{m+1}{m}{2m+1}{2m+1}{\frac32-j,\ldots,\frac32-j;\ell_1+k,\ldots,\ell_m+k,1}{0,k,\ldots,k;\frac32-j,\ldots,\frac32-j}{1}&=0,\\
\MeijerG{m+1}{m}{2m+1}{2m+1}{\frac32-j,\ldots,\frac32-j;k,\ldots,k,1}{0,k,\ldots,k;\frac32-j-\ell_1,\ldots,\frac32-j-\ell_m}{1}
&=\prod_{i=1}^m\frac{\Gamma(j-\frac12)}{\Gamma(j-\frac12+\ell_i)}
\end{align}
for non-negative integers $\ell_1,\ldots,\ell_m$, when not all of them are $0$. These three formulae allow us to construct a systematic reduction scheme for the Meijer $G$-functions which appear in~\eqref{3.15}; we refer to~\cite[\textsection 3]{FK16} for the details of this reduction scheme. As an example, for $m=2$ with $L_1,L_2$ being positive even integers, the Meijer $G$-function in~\eqref{3.15} reads
\begin{multline}\label{gamma-sum-2}
 \MeijerG[\bigg]{3}{2}{5}{5}{\frac32-j,\frac32-j;\frac{L_1}2+k,\frac{L_2}2+k,1}{0,k;\,\frac{3-L_1}2-j,\frac{3-L_2}2-j}{1}=
 \sum_{p=1}^{L_1/2}\sum_{q=1}^{L_2/2}
 \frac{\Gamma(L_1-p)\Gamma(j+k+p-\frac32)}{\Gamma(\frac{L_1}2)\Gamma(\frac{L_1}2-p+1)\Gamma(L_1+j+k-\frac32)}\\
 \times\frac{\Gamma(L_2-q)\Gamma(j+k+q-\frac32)}{\Gamma(\frac{L_2}2)\Gamma(\frac{L_2}2-q+1)\Gamma(L_2+j+k-\frac32)}
 \bigg(\mathcal K_{j,k}^{p,q}+\mathcal K_{j,k}^{q,p}+\frac{\Gamma(j-\frac12)^2}{\Gamma(p+j-\frac12)\Gamma(q+j-\frac12)}\bigg)
\end{multline}
with
\begin{equation}\label{K}
\mathcal K_{j,k}^{p,q}=
 \frac{\Gamma(j-\frac12)}{\Gamma(p)\Gamma(p+q+j+k-\frac32)}\sum_{\ell=1}^{q}
 \frac{\Gamma(j+k+\ell-\frac32)\Gamma(p+q-\ell)}{\Gamma(j+\ell-\frac12)\Gamma(q-\ell+1)}.
\end{equation}
We note that $\mathcal K_{j,k}^{L_1/2,L_1/2}$ is equal to the right-hand side in~\eqref{gamma-sum}; this is part of the general structure of the reduction scheme in which evaluation of the Meijer $G$-function for a given value of $m$ will include expressions for lower values of $m$. Table~\ref{table1} shows explicitly some probabilities $p_{N,k}^{P_m}$ for finding $k$ real eigenvalues.

\begin{table}[htbp]
\caption{Probabilities $p_{N,k}^{P_m}$ for $N=2,3,4$, $m=1,2,3$ and $L_1=L_2=L_3=4$.}\label{table1}
\begin{tabular}{c|c@{\qquad}c@{\qquad}c}
& $m=1$ & $m=2$ & $m=3$  \\
\hline
$p_{2,0}^{P_m}$ & $\frac{11}{35}\approx0.3143$ & $\frac{30\,641}{128\,625}\approx0.2382$ 
& $\frac{29\,654\,713}{157\,565\,625}\approx0.1882$ \\[.2em]
$p_{2,2}^{P_m}$ & $\frac{24}{35}\approx0.6857$ & $\frac{97\,984}{128\,625}\approx0.7618$ 
& $\frac{127\,910\,912}{157\,565\,625}\approx0.8118$ \\[1em]

$p_{3,1}^{P_m}$ & $\frac{73}{105}\approx0.6952$ & $\frac{1\,0968\,107}{3\,472\,875}\approx0.5667$ 
& $\frac{18\,344\,527\,259}{38\,288\,466\,875}\approx0.4791$ \\[.2em]
$p_{3,3}^{P_m}$ & $\frac{32}{105}\approx0.3048$ & $\frac{1\,504\,768}{3\,472\,875}\approx0.4333$ 
& $\frac{19\,943\,919\,616}{38\,288\,466\,875}\approx0.5209$ \\[1em]

$p_{4,0}^{P_m}$ & $\frac{421}{2205}\approx0.1909$ & $\frac{24\,149\,151\,605\,489}{214\,040\,075\,720\,625}\approx0.1128$ 
& $\frac{1\,431\,169\,011\,017\,974\,588\,501}{19\,078\,916\,984\,518\,815\,703\,125}\approx0.0750$ \\[.2em]
$p_{4,2}^{P_m}$ & $\frac{17\,576}{24\,255}\approx0.7246$ & $\frac{152\,493\,653\,488\,832}{214\,040\,075\,720\,625}\approx0.7125$  
& $\frac{140\,868\,762\,431\,563\,179\,004\,928}{209\,868\,086\,829\,706\,972\,734\,375}\approx0.6712$ \\[.2em]
$p_{4,4}^{P_m}$ & $\frac{2048}{24\,255}\approx0.0844$ & $\frac{37\,379\,270\,626\,304}{214\,040\,075\,720\,625}\approx0.1747$ 
& $\frac{53\,256\,465\,276\,946\,073\,255\,936}{209\,868\,086\,829\,706\,972\,734\,375}\approx0.2538$ 
\end{tabular}
\end{table}

Above we have seen that the probability of finding $k$ real eigenvalues is a rational number, $p_{N,k}^{P_m}\in\mathbb Q$, if all $L_1,\ldots,L_m$ are even integers. Thus, it is natural to ask if a similar phenomenon is present if one (or more) of the truncations $L_1,\ldots,L_m$ is an odd integer. The answer to this question appears to be negative. It follows from~~\cite[eq.~(3.15)]{FK16} that for $m=1$ and $L_1$ odd we have
\begin{multline}\label{gamma-sum-odd}
 \MeijerG[\bigg]{2}{1}{3}{3}{\frac32-j;\frac{L_1}2+k,1}{0,k;\,\frac{3-L_1}2-j}{1}=
\sum_{p=1}^{\frac{L_1-1}{2}}\sum_{q=1}^k\frac{\Gamma(k)\Gamma(p+q)\Gamma(L_1-p-1)\Gamma(j+k-q-\frac{1}{2})}{\Gamma(\frac{L_1-1}{2})\Gamma(\frac{L_1+1}{2}-p)\Gamma(k-q+1)\Gamma(j+k+L_1-\frac{3}{2})}\\
\times\bigg(\frac{1}{\Gamma(\frac{1}{2})\Gamma(p+q+\frac{1}{2})}+\frac{1}{\Gamma(p+\frac{1}{2})\Gamma(q+\frac{1}{2})} \bigg).
\end{multline}
Consequently, the probability of $k$ real eigenvalues will always be a polynomial in $\pi^{-1}$ with rational coefficients, e.g. for $L_1=5$ we have
\begin{equation}
p_{4,0}^{P_1}=1-\frac{385\,024}{135\,135}\frac1\pi+\frac{16\,777\,216}{18\,729\,711}\frac1{\pi^2}.
\end{equation}
We have been unable to find a systematic reduction scheme for larger $m$ when one (or more) of the truncations $L_1,\ldots,L_m$ is an odd integer. However, the result for probability of all eigenvalues real, as derived in~\cite[\textsection 3]{FK16} for $m=2, L_1=1,L_2=2$, indicates that the structure becomes more involved for larger $m$. For example, $p_{2,0}^{P_2}=1-(2\mathcal{G}+5)/(3\pi)$ and $p_{2,2}^{P_2}=(2\mathcal{G}+5)/(3\pi)$, where $\mathcal{G}\approx 0.915966$ is the Catalan's constant.

\section{Eigenvalue density}\label{s4}

\subsection{Pfaffian structure}\label{s4.1}
The Pfaffian formulae of Proposition \ref{prop1} are indicative of the fact, first identified in \cite{IK14}, that the eigenvalues of a product of truncated real orthogonal matrices form a Pfaffian point process. This means that the $k$-point correlations between real-real, real-complex and complex-complex eigenvalues are determined by correlation kernels depending only on two variables, and the number of eigenvalues, but not $k$. For example, focusing attention on the real eigenvalues, one has
\begin{equation}\label{4.1}
\rho_{(k)}^{\mathrm{real}}(x_1,\ldots,x_k)=\mathrm{Pf}\left[K^{rr}(x_j,x_l)\right]_{j,l=1,\ldots,k}
\end{equation}
with correlation kernel
\begin{equation}\label{4.2}
K^{rr}(x,y)=\begin{bmatrix}D(x,y)&S(x,y)\\-S(y,x)&\tilde{I}(x,y)\end{bmatrix}.
\end{equation}
Here $D(x,y)$ and $\tilde{I}(x,y)$ are antisymmetric functions of $x$ and $y$.

Significantly, the quantities in \eqref{4.2} are known explicitly in terms of skew orthogonal polynomials. We will focus on $S(x,y)$, which according to \eqref{4.1} and the anti-symmetry of $D(x,y)$ and $\tilde{I}(x,y)$, determines the density of real eigenvalues according to
\begin{equation}\label{4.3}
\rho_{(k)}^{r}(x)=S(x,x).
\end{equation}

\begin{proposition}
We have
\begin{equation}\label{4.8}
S(x,y)=\int_{-1}^{1}\,(x-v)\mathrm{sgn}(y-v)w_r^{(m)}(x)w_r^{(m)}(v)
\sum_{j=0}^{N-2}\bigg(\prod_{i=1}^m\frac{(L_i+j)!}{L_i!j!}\bigg)(xv)^j\mathrm{d}v,
\end{equation}
(cf.~the corresponding result for Gaussian matrices~\cite[Eq. (4.9)]{FI16}).
\end{proposition}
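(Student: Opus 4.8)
The plan is to obtain \eqref{4.8} by specialising to the present ensemble the general skew-orthogonal-polynomial representation of the kernel entry $S(x,y)$ of a Pfaffian point process of the form \eqref{2.2}. Such a representation, valid for any weights $w_r^{(m)},w_c^{(m)}$ entering the skew inner product \eqref{3.3} once one works with the associated monic skew orthogonal polynomials $\{R_j\}$ and norms $\{h_j\}$, is established in \cite[\S15.10]{Fo10} and \cite{Ma11}; its counterpart in the product-matrix setting runs exactly as in \cite[\S4]{FI16}, to which \eqref{4.8} is the direct analogue. For $N$ even it reads
\[
S(x,y)=w_r^{(m)}(x)\int_{-1}^{1}w_r^{(m)}(v)\,\mathrm{sgn}(y-v)\sum_{j=0}^{N/2-1}\frac{1}{h_j}\bigl(R_{2j+1}(x)R_{2j}(v)-R_{2j}(x)R_{2j+1}(v)\bigr)\,\mathrm{d}v,
\]
the overall sign being pinned down by $Z_N(1)=1$ together with \eqref{3.8}. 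The polynomials and norms needed here are precisely those produced by the Lemma of Section~\ref{s3.2}, namely \eqref{3.10} and \eqref{3.11}.

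The remaining work is algebraic. Writing $c_j:=\prod_{i=1}^m\frac{2j}{L_i+2j}$, so that $R_{2j}(z)=z^{2j}$ and $R_{2j+1}(z)=z^{2j+1}-c_jz^{2j-1}$ by \eqref{3.10}, a one-line expansion gives the factorisation
\[
R_{2j+1}(x)R_{2j}(v)-R_{2j}(x)R_{2j+1}(v)=(x-v)\bigl((xv)^{2j}+c_j(xv)^{2j-1}\bigr).
\]
From \eqref{3.11} one has $1/h_j=\prod_{i=1}^m\frac{(L_i+2j)!}{L_i!\,(2j)!}$ and hence $c_j/h_j=\prod_{i=1}^m\frac{(L_i+2j-1)!}{L_i!\,(2j-1)!}$, while for $j=0$ the coefficient $c_0$ vanishes, so the formally present $(xv)^{-1}$ term is absent. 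Collecting powers of $xv$ — the even powers $l=2j$ with $0\le l\le N-2$ arising from the $1/h_j$ terms and the odd powers $l=2j-1$ with $1\le l\le N-3$ from the $c_j/h_j$ terms, together exhausting $l=0,1,\dots,N-2$ — the sum over $j$ collapses to
\[
\sum_{j=0}^{N/2-1}\frac{1}{h_j}\bigl(R_{2j+1}(x)R_{2j}(v)-R_{2j}(x)R_{2j+1}(v)\bigr)=(x-v)\sum_{l=0}^{N-2}\Bigl(\prod_{i=1}^m\frac{(L_i+l)!}{L_i!\,l!}\Bigr)(xv)^l,
\]
which on substitution into the representation of $S(x,y)$ is exactly \eqref{4.8}.

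For $N$ odd, the representation of $S(x,y)$ coming from \eqref{3.6} (equivalently \eqref{3.14}) carries one extra boundary term built from the unpaired polynomial $R_{N-1}(z)=z^{N-1}$ and the moments $\mu_{2j-1}$ of \eqref{3.16}; in this case the pair sum only supplies the powers $l=0,\dots,N-3$, and a short computation using \eqref{3.16} identifies the boundary term with the missing $l=N-2$ summand, namely $w_r^{(m)}(x)\int_{-1}^{1}w_r^{(m)}(v)\,\mathrm{sgn}(y-v)\,(x-v)\bigl(\prod_{i=1}^m\tfrac{(L_i+N-2)!}{L_i!\,(N-2)!}\bigr)(xv)^{N-2}\,\mathrm{d}v$, so that \eqref{4.8} holds for all $N$. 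The main obstacle is therefore not the even-$N$ resummation, which is bookkeeping, but rather (i) fixing the correct conventions — the overall sign, and whether $w_r^{(m)}$ attaches to $x$ or to $y$ — when importing the general formula for $S$, and (ii) carrying out the short but genuine computation that matches the odd-$N$ boundary term with the $l=N-2$ term above.
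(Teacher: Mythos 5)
Your proof takes essentially the same route as the paper's: start from the skew-orthogonal-polynomial representation of $S(x,y)$ quoted from \cite{Ma11} (the paper's \eqref{4.5} and \eqref{4.7}), insert the polynomials \eqref{3.10} and norms \eqref{3.11}, and resum, and your even-$N$ algebra is correct and in fact more explicit than the paper's single line ``inserting the explicit form \dots gives \eqref{4.8}''. One small caveat: for $N$ odd the correction is not literally a single extra boundary term but threads through the whole pair sum via the hatted quantities $\hat q_{2j},\hat\tau_{2j}$ of \eqref{4.6}; you rightly flag the matching of these contributions to the missing $l=N-2$ summand as the remaining computation, which the paper likewise leaves implicit.
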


\begin{proof}
Let $w_r^{(m)}(x)$ be given by \eqref{2.3}. With $\{p_j(x)\}$ given by \eqref{3.10}, define
\begin{equation}
q_j(x)=w_r^{(m)}(x)p_j(x)
\qquad\text{and}\qquad
\tau_j(x)=-\half\int_{-1}^1\,\mathrm{sgn}(x-u)q_j(u)\mathrm{d}u.\label{4.4}
\end{equation}
In this notation, and introducing too $h_l$ as given by \eqref{3.11}, we have that for $N$ even (see e.g. \cite[\S4.5]{Ma11})
\begin{equation}\label{4.5}
S(x,y)=2\sum_{j=0}^{N/2-1}\frac{1}{h_j}\left(q_{2j}(x)\tau_{2j+1}(y)-q_{2j+1}(y)\tau_{2j}(x)\right).
\end{equation}
In the case of $N$ odd, let $\mu_{2j-1}$ be given by \eqref{3.16} and set $\mu_{2j}=0$. Then with
\begin{equation}
\hat{q}_j(x)=q_j(x)-\frac{\mu_{j+1}}{\mu_N}q_{N-1}(x)
\qquad\text{and}\qquad
\hat{\tau}_j(x)=-\half\int_{-1}^1\,\mathrm{sgn}(x-u)\hat{q}_j(u)\mathrm{d}u\label{4.6}
\end{equation}
we have (see e.g. \cite[\S4.6]{Ma11})
\begin{equation}\label{4.7}
S(x,y)=2\sum_{j=0}^{(N-1)/2-1}\frac{1}{h_j}\left(\hat{q}_{2j}(x)\hat{\tau}_{2j+1}(y)-\hat{q}_{2j+1}(x)\hat{\tau}_{2j}(y)\right)+\frac{q_{N-1}(x)}{\mu_N}.
\end{equation}
Inserting the explicit form of the skew orthogonal polynomials and their normalisation gives (\ref{4.8}).
\end{proof}

We know (e.g.~from \cite[\S4.6]{Ma11}) that the $k$-point correlation for the real-complex and the complex-complex eigenvalues has the same formal structure as \eqref{4.1} and \eqref{4.2}, with the matrix elements in \eqref{4.2} again permitting a single sum expression in terms of skew orthogonal polynomials. The simplest of these is $S(w,z)$ for the complex-complex correlation, where after simplification of the general formula we find
\begin{equation}\label{4.9}
S(w,z)=2\mathrm{i}\left(w_c^{(m)}\left((u,v)\right)w_c^{(m)}\left((x,y)\right)\right)^{1/2}
\sum_{j=0}^{N-2}\bigg(\prod_{i=1}^m\frac{(L_i+j)!}{L_i!j!}\bigg)(\bar{z}-w)(w\bar{z})^j,
\end{equation}
which again should be compared to the related result for Gaussian matrices~\cite[Eq. (4.4)]{FI16}. As for the real eigenvalues, the spectral density for the complex eigenvalues is given in terms of the kernel $S$,
\begin{equation}
\rho_{(1)}^{c}(z)=S(z,z).
\end{equation}
In the next subsection, we will see how the spectral density of the real eigenvalues~\eqref{4.3} can be used to get an explicit expression for the average number of real eigenvalues.

\subsection{Average number of real eigenvalues}\label{s4.4}

Integrating (\ref{4.3}) over $x$ gives the expected number of real eigenvalues, $\mathbb{E}(\#\mathrm{reals})$ say. 

\begin{corollary}
We have
\begin{equation}\label{expect-real}
\mathbb{E}(\#\mathrm{reals})=2\sum_{j=0}^{N-2}(-1)^j
\prod_{i=1}^m\binom{L_i+j}{L_i}a_{2\lceil\frac j2+1\rceil-1,2\lfloor\frac j2+1\rfloor},
\end{equation}
where according to~\eqref{3.15} we have
\begin{multline}\label{a-real}
a_{2\lceil\frac j2+1\rceil-1,2\lfloor\frac j2+1\rfloor}=
\bigg(\prod_{\ell=1}^m\frac{L_\ell\Gamma(\frac{L_\ell}{2})\Gamma(\frac{L_\ell+1}{2})}{2\sqrt{\pi}}\bigg)\\
\times\MeijerG[\Bigg]{m+1}{m}{2m+1}{2m+1}%
{\frac{1}2,\ldots,\frac12;\frac{L_1}2+j+1,\ldots,\frac{L_m}2+j+1,\lceil\frac j2\rceil+1}%
{\lceil\frac j2\rceil,j+1,\ldots,j+1;\frac{1-L_1}2,\ldots,\frac{1-L_m}2}{1}.
\end{multline}
\end{corollary}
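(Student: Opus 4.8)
The plan is to integrate the one-point density of the real eigenvalues. By \eqref{4.3} and the explicit single-sum formula \eqref{4.8}, $\mathbb{E}(\#\mathrm{reals})=\int_{-1}^{1}S(x,x)\,\mathrm{d}x$, and putting $y=x$ in \eqref{4.8} turns the factor $(x-v)\,\mathrm{sgn}(y-v)$ into $(x-v)\,\mathrm{sgn}(x-v)=|x-v|$. Since the sum over $j$ is finite and each resulting double integral is absolutely convergent (the factor $|x-v|$ offsets the mild endpoint behaviour of $w_r^{(m)}$), the summation and the two integrations may be interchanged, giving
\[
\mathbb{E}(\#\mathrm{reals})=\sum_{j=0}^{N-2}\prod_{i=1}^{m}\binom{L_i+j}{L_i}\int_{-1}^{1}\!\int_{-1}^{1}|x-v|\,w_r^{(m)}(x)\,w_r^{(m)}(v)\,(xv)^{j}\,\mathrm{d}x\,\mathrm{d}v.
\]

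Next I would use $|x-v|\,(xv)^{j}=\mathrm{sgn}(x-v)\big(x^{j+1}v^{j}-x^{j}v^{j+1}\big)$ and identify the two resulting double integrals with the quantities $a_{s,t}=\alpha_{s,t}\big|_{p_l(x)=x^l}$ of \eqref{3.15}. By the defining formula \eqref{3.3} one has $\int_{-1}^{1}\!\int_{-1}^{1}\mathrm{sgn}(x-v)\,w_r^{(m)}(x)\,w_r^{(m)}(v)\,x^{s-1}v^{t-1}\,\mathrm{d}x\,\mathrm{d}v=-a_{s,t}$, and relabelling the integration variables gives the antisymmetry $a_{t,s}=-a_{s,t}$. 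Combining these, the inner double integral equals $-a_{j+2,j+1}+a_{j+1,j+2}=2\,a_{j+1,j+2}$, so that $\mathbb{E}(\#\mathrm{reals})=2\sum_{j=0}^{N-2}\prod_{i=1}^{m}\binom{L_i+j}{L_i}\,a_{j+1,j+2}$.

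It remains to recast $a_{j+1,j+2}$ into the form to which \eqref{3.15} directly applies, i.e. with an odd first index and an even second index. Treating $j$ even and $j$ odd separately and using $a_{t,s}=-a_{s,t}$ once more, one checks in both parities that $a_{j+1,j+2}=(-1)^{j}\,a_{2\lceil j/2+1\rceil-1,\,2\lfloor j/2+1\rfloor}$, which is exactly \eqref{expect-real}. Finally, \eqref{a-real} follows from \eqref{3.15} by substituting $j\mapsto\lceil j/2\rceil+1$ and $k\mapsto\lfloor j/2\rfloor+1$ and then increasing all $2m+1$ upper and all $2m+1$ lower Meijer $G$-parameters by the common amount $\sigma=\lceil j/2\rceil$; this leaves the value unchanged, because at $z=1$ the identity $z^{\sigma}G^{m,n}_{p,q}(z;\{a_i\};\{b_i\})=G^{m,n}_{p,q}(z;\{a_i+\sigma\};\{b_i+\sigma\})$ (see \cite{Lu69}) reduces to invariance under a common shift. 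Using $\lceil j/2\rceil+\lfloor j/2\rfloor=j$ one verifies that the shifted parameter lists coincide with those displayed in \eqref{a-real}.

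The whole computation is essentially bookkeeping; the only points that need care are keeping the antisymmetry signs consistent, matching the ceiling/floor index conventions so that \eqref{3.15} can be quoted verbatim, and invoking the (standard) common-shift invariance of a Meijer $G$-function at unit argument. I do not anticipate any genuine obstacle.
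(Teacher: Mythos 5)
Your proposal is correct and follows essentially the same route as the paper: integrate $S(x,x)$ from \eqref{4.8} to obtain $2\sum_j\prod_i\binom{L_i+j}{L_i}a_{j+1,j+2}$, use the antisymmetry $a_{j,k}=-a_{k,j}$ to rewrite the indices in odd--even form with the sign $(-1)^j$, and then apply the common-shift identity for the Meijer $G$-function at unit argument to arrive at \eqref{a-real}. Your parity check and parameter-shift bookkeeping are accurate, so nothing further is needed.
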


\begin{proof}
Setting $x=y$ in \eqref{4.8} gives an explicit formula for $S(x,x)$, and we compute
\begin{equation}\label{4.20}
\mathbb{E}(\#\mathrm{reals})=\int_{-1}^1\,\rho_{(1)}^r(x)\mathrm{d}x=2\sum_{j=0}^{N-2}\prod_{i=1}^m\binom{L_i+j}{L_i}a_{j+1,j+2}
\end{equation}
where
\begin{equation}\label{4.21}
a_{j,k}=\int_{-1}^1\,\mathrm{d}x\int_{-1}^1\,\mathrm{d}y \,w_r^{(m)}(x)w_r^{(m)}(y)x^{j-1}y^{k-1}\mathrm{sgn}(y-x).
\end{equation}
We know from \eqref{3.14} an evaluation of $a_{j,k}$ in the cases that $j$ is odd and $k$ is even; this in fact covers all cases required by \eqref{4.20} due to the anti-symmetry $a_{j,k}=-a_{k,j}$. Using the latter we rewrite
(\ref{4.20}) as  (\ref{expect-real}).

Formula~\eqref{a-real} is identical to~\eqref{3.15} with $j\mapsto \lceil\frac j2+1\rceil$ and $k\mapsto \lfloor\frac j2+1\rfloor$, where have used the identity
\begin{equation}
z^\rho\MeijerG[\bigg]{m}{n}{p}{q}{a_1,\ldots,a_p}{b_1,\ldots,b_q}{z}
=\MeijerG[\bigg]{m}{n}{p}{q}{a_1+\rho,\ldots,a_p+\rho}{b_1+\rho,\ldots,b_q+\rho}{z}
\end{equation}
in order to simplify the indices. 
\end{proof}

We remark that if $L_1,\ldots,L_m$ are even positive integers, then we can evaluate the Meijer $G$-function in~\eqref{a-real} using a similar method as in Section~\ref{s3.3} which allows us to find explicit evaluations of the average number of real eigenvalues~\eqref{expect-real} as finite sums over ratios of gamma functions. For instance, for $m=1$ with $L_1$ even, it follows from~\eqref{gamma-sum} that
\begin{equation}\label{Ereal-m=1}
 \E(\#\text{reals})=\frac{\Gamma(\frac{L_1+1}2)}{\sqrt{\pi}\Gamma(L_1)}\sum_{j=0}^{N-2}\sum_{\ell=1}^{L_1/2}
 \frac{(-1)^j\Gamma(j+L_1+1)\Gamma(\lceil\frac j2\rceil+\frac12)\Gamma(j+\ell+\frac12)\Gamma(L_1-\ell)}%
 {\Gamma(j+1)\Gamma(j+L_1+\frac12)\Gamma(\ell+\lceil\frac j2\rceil+\frac12)\Gamma(\frac{L_1}2-\ell+1)}.
\end{equation}
The result for $m=1$ with $L_1$ odd can similarly be written down using~\eqref{gamma-sum-odd}. For $m=2$ with  $L_1,L_2$ even, it follows from~\eqref{gamma-sum-2} that
\begin{multline}\label{Ereal-m=2}
\E(\#\text{reals})=\frac{\Gamma(\frac{L_1+1}2)\Gamma(\frac{L_2+1}2)}{\pi\Gamma(L_1)\Gamma(L_2)}
\sum_{j=0}^{N-2}\sum_{p=1}^{L_1/2}\sum_{q=1}^{L_2/2}(-1)^j
\frac{\Gamma(L_1-p)\Gamma(j+L_1+1)\Gamma(j+p+\frac12)}{\Gamma(j+1)\Gamma(j+L_1+\frac12)\Gamma(\frac{L_1}2-p+1)}\\
\times\frac{\Gamma(L_2-q)\Gamma(j+L_2+1)\Gamma(j+q+\frac12)}{\Gamma(j+1)\Gamma(j+L_2+\frac12)\Gamma(\frac{L_2}2-q+1)}
\bigg(\mathcal K_j^{p,q}+\mathcal K_j^{q,p}+
\frac{\Gamma(\lceil\frac j2\rceil+\frac12)^2}{\Gamma(p+\lceil\frac j2\rceil+\frac12)\Gamma(q+\lceil\frac j2\rceil+\frac12)}\bigg)
\end{multline}
with
\begin{equation}
\mathcal K_{j}^{p,q}:=\mathcal K_{\lceil\frac j2+1\rceil,\lfloor\frac j2+1\rfloor}^{p,q}=
 \frac{\Gamma(\lceil\frac j2\rceil+\frac12)}{\Gamma(p)\Gamma(p+q+j+\frac12)}\sum_{\ell=1}^{q}
 \frac{\Gamma(j+\ell+\frac12)\Gamma(p+q-\ell)}{\Gamma(\lceil\frac j2\rceil+\ell+\frac12)\Gamma(q-\ell+1)}.
\end{equation}
We observe that both~\eqref{Ereal-m=1} and~\eqref{Ereal-m=2} are rational numbers for all $N,\frac{L_1}2,\frac{L_2}2\in\Z_+$. This is no surprise since we already know that all probabilities $p_{N,0}^{P_m},\ldots,p_{N,N}^{P_m}$ are rational numbers whenever $L_1,\ldots,L_m$ are positive even integers and 
\begin{equation}
\E(\#\text{reals})=\sum_{k=1}^Nk\,p_{N,k}^{P_m}.
\end{equation}
Thus, by the same arguments as in Section~\ref{s3.3}, we know that the average number of real eigenvalues (for any $m,N\geq1$) will be given by a rational number when $L_1,\ldots,L_m$ are even integers. Table~\ref{table2} shows explicit evaluations for the average number of real eigenvalues in a few cases; it is easily verified that the averages presented in Table~\ref{table2} are consistent with the probabilities from Table~\ref{table1}.

\begin{table}[htbp]
\caption{Average number of real eigenvalues for $m=1,2,3$, $N=2,3,4$ and either $L_1=L_2=L_3=4$.}\label{table2}
 \begin{tabular}{c|@{\quad}c@{\qquad}c@{\qquad}c}
   & $m=1$          & $m=2$      &     $m=3$         \\
  \hline
  $N=2$ & $\frac{48}{35}\approx1.3714$ &  
  $\frac{195\,968}{128\,625}\approx1.5236$ &
  $\frac{255\,821\,824}{157\,565\,625}\approx1.6236$  \\[.2em]
  $N=3$ & $\frac{169}{105}\approx 1.6095$ &  
  $\frac{6\,482\,411}{3\,472\,875}\approx1.8666$ &
  $\frac{78\,176\,286\,107}{38\,288\,446\,875}\approx2.0418$  \\[.2em]
  $N=4$ & $\frac{688}{385}\approx 1.7870$ &  
  $\frac{9\,817\,004\,416}{4\,622\,396\,625}\approx2.1238$ &
  $\frac{14\,537\,252\,216\,952\,832}{6\,166\,392\,657\,665\,625}\approx2.3575$  \\[.2em]
 \end{tabular}
\end{table}

\subsection{A single truncated real orthogonal matrix}\label{s4.2}

We will now look at asymptotic properties and we will start with the $m=1$ case which is the simplest by far. This simplicity,
first identified in \cite{KSZ09}, is due to the simple form of the weights $w_r^{(1)}$ and $w_c^{(1)}$.
Thus, it follows from \eqref{1.10} and \eqref{1.13} that
\begin{equation}\label{4.10}
w_r^{(1)}(\lambda)=\frac{(1-\lambda^2)^{L/2-1}}{\sqrt{2\pi}}\left(L\frac{\Gamma(1/2)\Gamma\left((L+1)/2\right)}{\Gamma(L/2)}\right)^{1/2},\quad |\lambda|<1
\end{equation}
and
\begin{equation*}
w_c^{(1)}\left((x,y)\right)=\left\{\begin{array}{ll}\frac{L(L-1)}{2\pi}|1-z^2|^{L-2}\int_{2|\mathrm{Im}\,z|/|1-z^2|}^1(1-t^2)^{(L-3)/2}\mathrm{d}t,&\quad L>1
\\ \frac{1}{2\pi}\frac{1}{|1-z^2|},&\quad L=1,\end{array}\right.
\end{equation*}
where $z=x+\mathrm{i}y\in D_{+}$.

Simplified, summed up expressions are also known for $S(x,x)$ (real case) and $S(z,z)$ (complex case), or equivalently for the real and complex eigenvalue densities. For this introduce the incomplete beta integral
\begin{equation*}
I_s(a,b)=\frac{1}{B(a,b)}\int_0^s\,t^{a-1}(1-t)^{b-1}\mathrm{d}t
\qquad\text{and}\qquad
B(a,b)=\int_0^1\,t^{a-1}(1-t)^{b-1}\mathrm{d}t.
\end{equation*}
We know from \cite{KSZ09,Ma11,Fi12} that
\begin{equation}\label{4.11}
S(x,x)=\frac{1}{B(L/2,1/2)}\frac{I_{1-x^2}(L+1,N-1)}{1-x^2}
+\frac{(1-x^2)^{(L-2)/2}|x|^{N-1}}{B(N/2,L/2)}I_{x^2}\left(\frac{N-1}2,\frac{L+2}2\right),
\end{equation}
and from \cite[Eq. (372)]{Ma11} that
\begin{equation}\label{4.12}
S(z,z)=\frac{2\mathrm{Im}(z) L(L-1)}{\pi}\frac{|1-z^2|^{L-2}}{(1-|z|^2)^{L+1}}
\int_{\frac{2|\mathrm{Im}(z)|}{|1-z^2|}}^1(1-t^2)^{(L-3)/2}\mathrm{d}t
\left(1-I_{|z|^2}(N-1,L+1)\right),
\end{equation}
for $L>1$ and
\begin{equation}\label{4.13}
S(z,z)=\frac{2\mathrm{Im}(z)}{\pi|1-z^2|(1-|z|^2)^2}\left(1-N|z|^{2N-2}+(N-1)|z|^{2N}\right)
\end{equation}
for $L=1$.

The sought asymptotic properties can now be deduced from the above exact formulae.

\begin{proposition}[Khoruzhenko et al.~\cite{KSZ09}]\label{prop:density-alpha-fixed}
Let $\alpha=N/(N+L)$, and let $\chi_J$ denote the indicator function for the interval $J$.
For $\alpha\in(0,1)$ fixed, the asymptotic density of the real eigenvalues reads
\begin{equation}\label{real-density}
\lim_{N\to\infty}\frac1{\sqrt N}\rho_{(1)}^r(x)=\sqrt{\frac{(1-\alpha)}{\pi\alpha}}\frac{1}{1-x^2} \, \chi_{-\sqrt{\alpha}<x<\sqrt{\alpha}},
\end{equation}
Likewise, the density of the complex eigenvalues with $\alpha\in(0,1)$ fixed is
\begin{equation}\label{global1}
\lim_{N\to\infty}\frac1N\rho_{(1)}^c(z)=\frac{(1-\alpha)}{\pi\alpha}\frac{1}{(1-|z|^2)^2} \, \chi_{|z|<\sqrt\alpha}.
\end{equation}
\end{proposition}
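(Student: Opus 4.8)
The plan is to derive both limits by a direct large-$N$ analysis of the closed forms \eqref{4.11} and \eqref{4.12}, all of whose $N$-dependence becomes explicit once we write $L=N(1-\alpha)/\alpha$ (equivalently $N+L=N/\alpha$), so that $L$ grows linearly with $N$. The three ingredients are: Stirling's formula for the ratios of gamma functions hidden in the Beta-function prefactors; a law-of-large-numbers/Chebyshev estimate for the regularised incomplete Beta functions, each of which is the distribution function of a Beta random variable whose parameters grow linearly in $N$; and, for the complex density only, a Laplace (Watson's lemma) estimate of the $t$-integral in \eqref{4.12}.

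For the real density I start from \eqref{4.11}. In the first term, $I_{1-x^2}(L+1,N-1)$ is the value at $1-x^2$ of the distribution function of a $\mathrm{Beta}(L+1,N-1)$ variable; its mean $\tfrac{L+1}{L+N}\to 1-\alpha$ and its variance tends to $0$, so by Chebyshev it converges to $\chi_{x^2<\alpha}$ for $x$ with $x^2\neq\alpha$. The prefactor $1/B(L/2,1/2)=\Gamma\!\big(\tfrac{L+1}{2}\big)/\!\big(\sqrt\pi\,\Gamma\!\big(\tfrac L2\big)\big)$ grows like a multiple of $\sqrt L$, hence of $\sqrt N$, by Stirling, which is exactly the $\sqrt N$-divergence being extracted. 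For the second term of \eqref{4.11}, Stirling applied to $(1-x^2)^{(L-2)/2}|x|^{N-1}/B(N/2,L/2)$ produces a rate $\exp\!\big(\tfrac N2\,g(x^2)\big)$ with $g(t)=\tfrac{1-\alpha}{\alpha}\log\tfrac{1-t}{1-\alpha}+\log\tfrac t\alpha$; since $g$ is strictly concave with $g(\alpha)=g'(\alpha)=0$ one has $g(t)<0$ for $t\neq\alpha$, so this term is exponentially small uniformly on compact subsets of $(-1,1)\setminus\{\pm\sqrt\alpha\}$ and drops out of the $\sqrt N$-scaled limit. Collecting the leading Stirling asymptotics of the surviving first term together with the limiting step function yields \eqref{real-density}.

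For the complex density (the relevant case is $L>1$, since $L\to\infty$) I treat \eqref{4.12} similarly: the factor $1-I_{|z|^2}(N-1,L+1)$ converges, by the same Beta-concentration argument, to $\chi_{|z|^2<\alpha}$, which fixes the support $|z|<\sqrt\alpha$. The non-trivial step is the remaining factor
\[
\frac{2\,\mathrm{Im}(z)\,L(L-1)}{\pi}\,\frac{|1-z^2|^{L-2}}{(1-|z|^2)^{L+1}}\int_{2\mathrm{Im}(z)/|1-z^2|}^{1}(1-t^2)^{(L-3)/2}\,\mathrm{d}t .
\]
Substituting $s=|1-z^2|\,t$, exactly as in the proof of Theorem~\ref{theorem1}, turns the integral into $\int_{2\,\mathrm{Im}z}^{|1-z^2|}\big(|1-z^2|^2-s^2\big)^{(L-3)/2}\,\mathrm{d}s$, whose integrand is maximised at the lower endpoint $s=2\,\mathrm{Im}\,z$. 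A Laplace estimate at that endpoint, using the identity $|1-z^2|^2-4(\mathrm{Im}\,z)^2=(1-|z|^2)^2$ established in the proof of Theorem~\ref{theorem1}, gives the leading behaviour $\sim(1-|z|^2)^{L-1}/(2L\,\mathrm{Im}\,z)$. Inserting this, the powers of $(1-|z|^2)$ together with the explicit $\mathrm{Im}\,z$ and $L$ factors cancel, leaving $\dfrac{L-1}{\pi(1-|z|^2)^2}\,\chi_{|z|^2<\alpha}$; dividing by $N$ and using $L/N\to(1-\alpha)/\alpha$ produces \eqref{global1}.

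I expect the main obstacle to be the Laplace analysis of this endpoint integral: it has to be carried out with uniform control on compact subsets of $D_+$ bounded away from the circle $|z|=\sqrt\alpha$, and one must verify that replacing $(L-3)/2$ by $L/2$ in the exponent, cutting the integral off at $s=|1-z^2|$, and the Stirling corrections to the $L(L-1)$ and Beta-function prefactors all contribute only lower-order terms. A secondary point to flag is that the limiting densities are discontinuous across the spectral edge, so the convergence in \eqref{real-density} and \eqref{global1} should be read for $x$, respectively $z$, in the bulk (away from $|x|=\sqrt\alpha$, resp.\ $|z|=\sqrt\alpha$); the transition across the edge happens on the scale $N^{-1/2}$ and is governed by a separate edge-scaling limit not treated here.
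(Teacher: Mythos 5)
Your route is the same as the paper's: the paper's proof consists of substituting $L=N(1-\alpha)/\alpha$ into the exact formulae \eqref{4.11} and \eqref{4.12} and invoking steepest descent (deferring the details to Mays' thesis), and you have simply carried that asymptotic analysis out explicitly. Your treatment of the complex density is correct: the Beta-concentration argument for $1-I_{|z|^2}(N-1,L+1)$ fixes the support, the endpoint Laplace estimate of the $t$-integral (using $|1-z^2|^2-4(\mathrm{Im}\,z)^2=(1-|z|^2)^2$) gives $\int_{c}^1(1-t^2)^{(L-3)/2}\mathrm{d}t\sim(1-c^2)^{(L-1)/2}/(Lc)$ with $c=2\,\mathrm{Im}\,z/|1-z^2|$, and all powers of $|1-z^2|$ and $(1-|z|^2)$ cancel to leave $(L-1)/\bigl(\pi(1-|z|^2)^2\bigr)$, which after division by $N$ is exactly \eqref{global1}. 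Your caveats about uniformity on compacts away from $|z|=\sqrt\alpha$ and about the edge are appropriate.

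The real-eigenvalue part, however, does not close as written. You correctly identify that the second term of \eqref{4.11} is exponentially small pointwise for $x^2\neq\alpha$ and that the first term carries the limit, but the final step ``collecting the leading Stirling asymptotics \ldots yields \eqref{real-density}'' is precisely where the constant must be produced, and your own computation gives a different one: $1/B(L/2,1/2)=\Gamma(\tfrac{L+1}{2})/\bigl(\sqrt\pi\,\Gamma(\tfrac L2)\bigr)\sim\sqrt{L/(2\pi)}$, so after dividing by $\sqrt N$ the surviving term tends to $\sqrt{(1-\alpha)/(2\pi\alpha)}\,(1-x^2)^{-1}\chi_{x^2<\alpha}$, which is $1/\sqrt2$ times the right-hand side of \eqref{real-density}. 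The discarded second term cannot supply the missing $\sqrt2$: at $x=0$ it vanishes identically and $S(0,0)=1/B(L/2,1/2)$ exactly, and more generally it only contributes an $O(1)$ bump of width $O(N^{-1/2})$ near $|x|=\sqrt\alpha$. (The value $\sqrt{L/(2\pi)}$ is also what the Ginibre limit $\alpha\to0$ demands, since the bulk real-eigenvalue density of a real Gaussian matrix is $1/\sqrt{2\pi}$ and truncation with $L\gg N$ amounts to a rescaling by $1/\sqrt{L+N}$.) So either you must exhibit where a factor of $\sqrt2$ enters --- and I do not see one available in \eqref{4.11} --- or you must conclude that your analysis yields a constant that disagrees with the one displayed in \eqref{real-density} (and correspondingly in Corollary~\ref{col:Em=1}); asserting the stated formula without computing the constant papers over this. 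Carry the Stirling step through explicitly and reconcile the factor of $\sqrt2$ before claiming \eqref{real-density}.
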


\begin{proof}
The starting point is the exact formulae~\eqref{4.11} and~\eqref{4.12}. First, use that $L=N(1-\alpha)/\alpha$. In the asymptotic limit of the integrals in~\eqref{4.11} and~\eqref{4.12} can now be evaluated using the method of steepest descent, see~\cite[\textsection 7.6.1]{Ma11}.
\end{proof}

\begin{corollary}\label{col:Em=1}
Under the same assumptions as in Proposition~\ref{prop:density-alpha-fixed}, the average number of real eigenvalues grow asymptotically as
\begin{equation}
\E(\#\textup{reals})\sim2\sqrt{\frac{N(1-\alpha)}{\pi\alpha}}\mathrm{artanh}\sqrt{\alpha}.
\end{equation}
\end{corollary}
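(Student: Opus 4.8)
The plan is to integrate the asymptotic real eigenvalue density from Proposition~\ref{prop:density-alpha-fixed} over the real line. Since $\E(\#\textup{reals}) = \int_{-1}^1 \rho_{(1)}^r(x)\,\mathrm{d}x$, and we have the scaling limit $\rho_{(1)}^r(x) \sim \sqrt{N}\,\sqrt{(1-\alpha)/(\pi\alpha)}\,(1-x^2)^{-1}$ supported on $(-\sqrt{\alpha},\sqrt{\alpha})$, the leading-order behaviour should come from directly substituting this into the integral. First I would write
\begin{equation*}
\E(\#\textup{reals}) \sim \sqrt{N}\sqrt{\frac{1-\alpha}{\pi\alpha}}\int_{-\sqrt{\alpha}}^{\sqrt{\alpha}}\frac{\mathrm{d}x}{1-x^2},
\end{equation*}
and then evaluate the elementary integral using $\int \mathrm{d}x/(1-x^2) = \mathrm{artanh}(x)$, giving $2\,\mathrm{artanh}\sqrt{\alpha}$, which yields exactly the claimed formula.

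The one genuine subtlety is justifying the interchange of the limit and the integral, i.e. passing from the pointwise (or locally uniform) convergence in Proposition~\ref{prop:density-alpha-fixed} to convergence of the integrals. The integrand $(1-x^2)^{-1}$ is integrable on the \emph{open} interval $(-\sqrt{\alpha},\sqrt{\alpha})$ for $\alpha \in (0,1)$ fixed and bounded away from $1$, so there is no issue at the endpoints of the limiting support; the potential concern is the behaviour of the exact density $\rho_{(1)}^r(x)$ near $x = \pm 1$ for finite $N$, outside the limiting support. Here one would appeal to the exact formula~\eqref{4.11}: the first term has the incomplete-beta factor $I_{1-x^2}(L+1,N-1)$ which, with $L = N(1-\alpha)/\alpha$, decays super-polynomially once $1-x^2$ drops below roughly $1-\alpha$ (equivalently $|x| > \sqrt{\alpha}$), by a standard large-deviation estimate for the regularised incomplete beta function; the second term carries the factor $|x|^{N-1}$ which is exponentially small for $|x|$ bounded below $1$ but only polynomially controlled near $|x|=1$, yet is multiplied by $(1-x^2)^{(L-2)/2}$ which vanishes rapidly there. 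A clean way to package this is to split $\int_{-1}^1 = \int_{|x| < \sqrt{\alpha} - \epsilon} + \int_{\sqrt{\alpha}-\epsilon \le |x| \le 1}$, use dominated convergence on the first piece (where $(1-x^2)^{-1}$ is bounded), show the second piece contributes $o(\sqrt{N})$ uniformly via the decay estimates above, and finally let $\epsilon \to 0$.

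I expect the main obstacle to be precisely this tail estimate: controlling $\int_{\sqrt{\alpha}-\epsilon \le |x| \le 1}\rho_{(1)}^r(x)\,\mathrm{d}x = o(\sqrt{N})$ requires a quantitative bound on the incomplete beta function $I_{1-x^2}(L+1,N-1)$ in the regime $1-x^2 < \alpha$, where its argument lies below the "natural" value $L/(L+N) = 1-\alpha$; a Chernoff-type bound, or equivalently the saddle-point analysis already invoked in~\cite[\S7.6.1]{Ma11} for the density itself, supplies exponential decay $e^{-cN}$ there, which dominates the polynomial prefactors. Since the proof of Proposition~\ref{prop:density-alpha-fixed} already carries out the steepest-descent analysis of exactly these integrals, in practice one can cite that analysis for the requisite uniform estimates rather than redo it, making the corollary essentially a one-line integration once the convergence is granted. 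A brief remark that the sharpness of the support truncation in~\eqref{real-density} is what makes the $\mathrm{artanh}$ appear (as opposed to a divergent $\mathrm{artanh}(1)$) completes the picture.
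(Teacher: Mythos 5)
Your proposal is correct and follows essentially the same route as the paper, whose proof is the one-line observation that the result ``follows by integration over $x$ in~\eqref{real-density}''. The additional care you take in justifying the interchange of limit and integral (dominated convergence on the bulk plus tail estimates for the incomplete beta factors near $|x|=1$) is a genuine refinement of a step the paper leaves implicit, but it does not change the underlying argument.
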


\begin{proof}
Follows by integration over $x$ in~\eqref{real-density}, since $\E(\#\text{reals})=\int dx \rho_{(1)}^r(x)$.
\end{proof}

The large-$N$ limit is different when $L$ (rather than $\alpha$) is kept fixed. In this limit, the real and complex densities, \eqref{4.11} and \eqref{4.12}, develop singular behaviour on the boundaries, i.e. at $x=\pm1$ and $|z|=1$ for the real and complex density, respectively. Thus, we need to look at the neighbourhood of the boundaries in order to see any non-trivial behaviour. In the real case the following result holds.

\begin{proposition}[Khoruzhenko et al.~\cite{KSZ09}]\label{prop:density-alpha-1}
For $L>1$ fixed and $x\in(0,\infty)$, we have
\begin{equation}
 \lim_{N\rightarrow\infty}\frac1N\rho_{(1)}^r\Big(1-\frac xN\Big)=\tilde\rho_{(1)}^r(x)
\end{equation}
with
\begin{equation}\label{4.17}
 \tilde\rho_{(1)}^r(x)
 =\frac{x^{L/2-1}e^{-x}}{2\Gamma(L/2)}\left(1-x^{L/2+1}\overline{\gamma}(L/2+1,x)+\frac{(2x)^L}{B(L/2,1/2)}\overline{\gamma}(L+1,2x)\right),
\end{equation}
where $\overline\gamma(n,x)=(x^n\Gamma(n))^{-1}\int_0^xt^{n-1}e^{-t}dt$.
\end{proposition}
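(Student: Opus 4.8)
The plan is to start from the exact formula~\eqref{4.11} for $S(x,x)=\rho_{(1)}^r(x)$ valid at finite $N$ and $L$, substitute $x=1-x_0/N$ (writing $x_0$ for the scaling variable to avoid collision with the running variable), and extract the leading term as $N\to\infty$ with $L>1$ held fixed. The formula has two pieces, and I would treat them separately. For the first piece, $\tfrac{1}{B(L/2,1/2)}\tfrac{I_{1-x^2}(L+1,N-1)}{1-x^2}$, I note that $1-x^2=(1-x)(1+x)\sim 2x_0/N$, so this contributes a factor $\sim N/(2x_0)$; the remaining task is the asymptotics of the regularised incomplete beta function $I_{1-x^2}(L+1,N-1)$ with its second parameter going to infinity and its argument $1-x^2\sim 2x_0/N$ going to zero like $1/N$. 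The natural move is to write $I_{s}(a,b)$ as the tail of a binomial/Poisson-type sum or, more directly, to recognise that for $a$ fixed, $b\to\infty$ and $s=u/b$ the incomplete beta integral converges to an incomplete gamma integral: $I_{u/b}(a,b)\to \tfrac{1}{\Gamma(a)}\int_0^u t^{a-1}e^{-t}\,dt$, which is exactly the kind of $\overline\gamma$ that appears in~\eqref{4.17}. Setting $a=L+1$, $b=N-1$ and $s=1-x^2\sim 2x_0/N$ so that $u=bs\to 2x_0$, the first term should produce $\tfrac{N}{2x_0}\cdot\tfrac{1}{B(L/2,1/2)}\cdot\tfrac{1}{\Gamma(L+1)}\int_0^{2x_0}t^L e^{-t}\,dt$, which one then rewrites using $\overline\gamma(L+1,2x_0)=(2x_0)^{-(L+1)}\Gamma(L+1)^{-1}\int_0^{2x_0}t^L e^{-t}\,dt$ to match the $(2x_0)^L/B(L/2,1/2)\cdot\overline\gamma(L+1,2x_0)$ term in~\eqref{4.17}, after pulling out the overall $x_0^{L/2-1}e^{-x_0}/(2\Gamma(L/2))$ prefactor.

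For the second piece, $\tfrac{(1-x^2)^{(L-2)/2}|x|^{N-1}}{B(N/2,L/2)}I_{x^2}\!\big(\tfrac{N-1}{2},\tfrac{L+2}{2}\big)$, I would handle the three $N$-dependent factors in turn. The factor $(1-x^2)^{(L-2)/2}\sim (2x_0/N)^{(L-2)/2}$; the factor $|x|^{N-1}=(1-x_0/N)^{N-1}\to e^{-x_0}$; and $1/B(N/2,L/2)=\Gamma(N/2+L/2)/(\Gamma(N/2)\Gamma(L/2))\sim (N/2)^{L/2}/\Gamma(L/2)$ by Stirling. For the incomplete beta factor $I_{x^2}(\tfrac{N-1}{2},\tfrac{L+2}{2})$ the argument $x^2=(1-x_0/N)^2\to 1$, so I use the symmetry $I_s(a,b)=1-I_{1-s}(b,a)$ with $1-x^2\sim 2x_0/N$, reducing it to $1-I_{2x_0/N}(\tfrac{L+2}{2},\tfrac{N-1}{2})$, and again apply the fixed-first-parameter/large-second-parameter limit to turn it into $1-\tfrac{1}{\Gamma((L+2)/2)}\int_0^{x_0}t^{L/2}e^{-t}\,dt=1-x_0^{L/2+1}\overline\gamma(L/2+1,x_0)$ after expressing things through $\overline\gamma$. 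Collecting the powers of $N$: $(2x_0/N)^{(L-2)/2}\cdot (N/2)^{L/2}= 2 x_0^{(L-2)/2} N$, giving an overall factor $N$ exactly as needed for $\tfrac1N\rho^r_{(1)}$ to have a limit, and after dividing by $N$ the second term becomes $x_0^{L/2-1}e^{-x_0}/\Gamma(L/2)\cdot\big(1-x_0^{L/2+1}\overline\gamma(L/2+1,x_0)\big)$ up to the factor of $\tfrac12$; combining with the first-term contribution reproduces~\eqref{4.17}.

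The main obstacle I anticipate is making the incomplete-beta asymptotics rigorous and uniform enough on $x_0\in(0,\infty)$ to justify the termwise limits, rather than just formally. The limit $I_{u/b}(a,b)\to \tfrac{1}{\Gamma(a)}\int_0^u t^{a-1}e^{-t}\,dt$ is elementary (change variables $t=bs$ in the beta integral, use $(1-t/b)^{b-a}\to e^{-t}$ with dominated convergence, and $B(a,b)\sim \Gamma(a)b^{-a}$), but one must also control the regime $x_0\to\infty$ where $e^{-x_0}$ decay competes with the algebraically growing prefactors, and check that the ``$|x|^{N-1}\to e^{-x_0}$'' step and the Stirling estimate for $1/B(N/2,L/2)$ carry uniform error terms. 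A clean way to package all of this — and the route I would take in the writeup — is to invoke the known steepest-descent analysis of exactly these integral representations in~\cite[\textsection 7.6]{Ma11}: the integrands in~\eqref{4.11} are the same as those already analysed there, only the scaling of $x$ near the edge is different, so one reuses the saddle-point machinery with the edge-scaling substitution. Thus the proof reduces to: (i) insert $x=1-x_0/N$ into~\eqref{4.11}; (ii) apply the incomplete-beta-to-incomplete-gamma limit (equivalently, the edge steepest-descent estimate of~\cite{Ma11}) to each of the two terms; (iii) rewrite the resulting incomplete gamma integrals via the normalised function $\overline\gamma$ and collect powers of $N$; (iv) verify the prefactors assemble into $x_0^{L/2-1}e^{-x_0}/(2\Gamma(L/2))$ times the bracketed expression in~\eqref{4.17}.
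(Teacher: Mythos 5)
The paper gives no proof of this proposition --- it is quoted from \cite{KSZ09} --- so your derivation from the exact formula \eqref{4.11} is doing real work, and the route you choose is the natural one: insert $x=1-x_0/N$, and use the elementary limit $I_{u/b}(a,b)\to\gamma(a,u)/\Gamma(a)$ for $a$ fixed and $b\to\infty$ together with $|x|^{N-1}\to e^{-x_0}$ and Stirling for $1/B(N/2,L/2)$. Your treatment of the second term of \eqref{4.11} is correct in structure, apart from a bookkeeping slip in the powers of two: $(2x_0/N)^{(L-2)/2}(N/2)^{L/2}=\tfrac12\,x_0^{(L-2)/2}N$, not $2\,x_0^{(L-2)/2}N$; with the correct constant the factor $\tfrac12$ you later ``wave in'' appears automatically. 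Your worry about uniformity in $x_0$ is unnecessary, since the proposition asserts only a pointwise limit at each fixed $x_0\in(0,\infty)$.

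The genuine problem is in the assembly of the first term. Your computation correctly gives
\[
\lim_{N\to\infty}\frac1N\cdot\frac{1}{B(L/2,1/2)}\frac{I_{1-x^2}(L+1,N-1)}{1-x^2}
=\frac{(2x_0)^{L}}{B(L/2,1/2)}\,\overline\gamma(L+1,2x_0),
\]
with no factor $x_0^{L/2-1}e^{-x_0}/(2\Gamma(L/2))$ in front. The phrase ``after pulling out the overall prefactor'' is not a legitimate operation --- that prefactor simply is not present in this contribution --- so your limit does \emph{not} equal the corresponding term of \eqref{4.17}, where $\frac{(2x)^L}{B(L/2,1/2)}\overline\gamma(L+1,2x)$ sits inside the bracket multiplied by $x^{L/2-1}e^{-x}/(2\Gamma(L/2))$. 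What your calculation actually yields is
\[
\frac{x^{L/2-1}e^{-x}}{2\Gamma(L/2)}\Big(1-x^{L/2+1}\overline\gamma(L/2+1,x)\Big)+\frac{(2x)^{L}}{B(L/2,1/2)}\overline\gamma(L+1,2x),
\]
with the last term standing \emph{outside} the exponentially decaying prefactor. The internal evidence supports your computation rather than the printed formula: \eqref{4.17} as printed decays exponentially as $x\to\infty$, which contradicts the algebraic $1/x$ tail \eqref{4.18} and the assertion that $\int\tilde\rho^r_{(1)}$ diverges, whereas the displaced-bracket version has tail $\sim 1/(2xB(L/2,1/2))$, matching the bulk density $1/(B(L/2,1/2)(1-x^2))$ evaluated at $x=1-x_0/N$. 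So the one step you elide is exactly where the argument, read as a proof of the printed statement, breaks; you should either flag the misplaced parenthesis in \eqref{4.17} and prove the corrected formula, or reconcile your first-term limit with the printed one --- it cannot be done by ``pulling out'' a factor that is not there.
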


It is seen that to leading order the tail behaviour of the density~\eqref{4.17} is
\begin{equation}\label{4.18}
\lim_{N\rightarrow\infty}\tilde{\rho}_{(1)}^r(x)\underset{x\rightarrow\infty}{\sim}\frac{1}{B(L/2,1/2)x}.
\end{equation}
It follows that $\int \tilde{\rho}_{(1)}^r(x)dx$ diverges, which tells us that average  number of real eigenvalues grows with $N$. However, unlike the case with $\alpha$ fixed, the average number number of real eigenvalues cannot be obtained by a trivial integration. Nonetheless, it is claimed in~\cite{KSZ09} that
\begin{equation}
\E(\#\text{reals})\underset{N\to\infty}{\sim}\frac{\log N}{B(L/2,1/2)}.
\end{equation}

We know from \cite{FK16} that the probability $p_{N,N}^{P_1}$ of all eigenvalues of a single truncated real orthogonal
matrix being real, which is the coefficient of $\zeta^N$ in (\ref{3.13}) and (\ref{3.14}) with $m=1$, can be written
in the product form (we write $L_1 = L$)
\begin{equation}\label{pnn1}
p_{N,N}^{P_1} = \prod_{j=0}^{N-1} \frac{\Gamma(L+j) \Gamma((L+j)/2)}{\Gamma(L+(N+j-1)/2) \Gamma(L/2)}.
\end{equation}
Introducing the Barnes $G$-function $G(z)$ by its relation to the gamma function,
$G(z+1) = \Gamma(z) G(z)$, we see that (\ref{pnn1}) can be written
\[
p_{N,N}^{P_1} = \frac{1}{(\Gamma(L/2))^N}
\frac{G((2L+N)/2)}{G(L)}
\frac{G((N+L)/2)}{G(L/2)}
\frac{G((N+L + 1)/2)}{G((L+1)/2)}
\frac{G((2L+N-1)/2)}{G(N+L-1/2)}.
\]
Using the asymptotic expansion \cite{Ba00}
\begin{equation}
\log G(x) \mathop{\sim}\limits_{x \to \infty} \frac{x^2}{2} \log x - \frac{3}{4} x^2 + {\rm O} ( x \log x)
\end{equation}
in this formula allows the leading large $N,L$ asymptotic form of $p_{N,N}^{P_1}$ to be deduced.

\begin{proposition}
Let $L = c N$ with $c>0$ fixed. We have
\begin{multline}\label{pnn2}
p_{N,N}^{P_1}\mathop{\sim}\limits_{N\to\infty}\exp\Big\{N^2\Big(-\frac{c}{4}-\frac{1}{4}\log2-\frac{c}{2}\log c-\frac{3c^2}{4}\log c\\
-\frac14(c+1)^2\log(c+1)+(c+\tfrac12)^2\log(c+\tfrac12)\Big)
+{\rm O}(N\log N)\Big\}.
\end{multline}
\end{proposition}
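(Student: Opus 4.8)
The plan is to substitute the Barnes $G$-function asymptotic $\log G(x) \sim \tfrac{x^2}{2}\log x - \tfrac34 x^2 + \mathrm{O}(x\log x)$ together with Stirling's formula $\log\Gamma(x) = x\log x - x + \mathrm{O}(\log x)$ directly into the product representation
\[
p_{N,N}^{P_1} = (\Gamma(L/2))^{-N}\,\frac{G((2L+N)/2)}{G(L)}\frac{G((N+L)/2)}{G(L/2)}\frac{G((N+L+1)/2)}{G((L+1)/2)}\frac{G((2L+N-1)/2)}{G(N+L-1/2)}
\]
recorded just above the statement. Setting $L=cN$ and taking logarithms, each of the eight $G$-arguments and the $\Gamma$-argument is of the form $aN+b$ with $a,b$ explicit: the numerator arguments have $a = \tfrac{2c+1}{2},\tfrac{c+1}{2},\tfrac{c+1}{2},\tfrac{2c+1}{2}$, the denominator arguments have $a=c,\tfrac c2,\tfrac c2,c+1$, and the $\Gamma$-argument has $a=\tfrac c2$. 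Since $\log(aN+b)=\log N+\log a+\mathrm{O}(1/N)$, one has $\log G(aN+b)=\tfrac{a^2}{2}N^2\log N+\big(\tfrac{a^2}{2}\log a-\tfrac34 a^2\big)N^2+\mathrm{O}(N\log N)$, while $-N\log\Gamma(cN/2)=-\tfrac c2 N^2\log N+\big(-\tfrac c2\log\tfrac c2+\tfrac c2\big)N^2+\mathrm{O}(N\log N)$.

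The first bookkeeping step is to collect the coefficient of $N^2\log N$, namely $\tfrac12\sum\pm a^2-\tfrac c2$ over the four numerator and four denominator factors. A direct computation shows this vanishes identically in $c$; the cancellation is of course forced by $p_{N,N}^{P_1}\le 1$, but carrying it out explicitly is the step that reduces the exponent to order $N^2$ and it serves as a safeguard against sign errors. The second step is to collect the coefficient of $N^2$ with no logarithm, i.e. $\sum\pm\big(\tfrac{a^2}{2}\log a-\tfrac34 a^2\big)$ together with $-\tfrac c2\log\tfrac c2+\tfrac c2$. Writing $\tfrac{2c+1}{2}=c+\tfrac12$ so that the two $G$-factors with that leading coefficient directly produce a $\log(c+\tfrac12)$, and grouping the remaining contributions by $\log(c+1)$, $\log c$, $\log 2$, and the pure polynomial part, the terms combine to give exactly $-\tfrac c4-\tfrac14\log 2-\tfrac c2\log c-\tfrac{3c^2}{4}\log c-\tfrac14(c+1)^2\log(c+1)+(c+\tfrac12)^2\log(c+\tfrac12)$, which is the asserted exponent.

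It then remains to confirm that everything not yet accounted for is $\mathrm{O}(N\log N)$: in each $\tfrac{(aN+b)^2}{2}\log(aN+b)$ the cross term $abN\log N$ and the $\mathrm{O}(x\log x)$ remainder of the Barnes expansion are $\mathrm{O}(N\log N)$, the $-\tfrac34(aN+b)^2$ pieces contribute only $\mathrm{O}(N)$ beyond the recorded $N^2$ term, and the $\mathrm{O}(\log N)$ Stirling error for $\log\Gamma(cN/2)$, multiplied by $N$, is again $\mathrm{O}(N\log N)$. There is no genuine conceptual obstacle here; the only things to watch are carrying all eight $G$-factors through with the correct signs and leading coefficients, and checking that the $\log N$-divergences really cancel before reading off the $N^2$ coefficient. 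The computation of that coefficient, and the verification of the $N^2\log N$ cancellation, are the parts most prone to arithmetic slips, and hence are where I would concentrate the care.
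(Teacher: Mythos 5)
Your proposal is correct and follows exactly the route the paper intends: substitute the Barnes $G$ asymptotic $\log G(x)\sim\tfrac{x^2}{2}\log x-\tfrac34x^2+\mathrm{O}(x\log x)$ and Stirling into the displayed product of $G$-function ratios with $L=cN$, verify the $N^2\log N$ cancellation, and collect the $N^2$ coefficient. I checked the bookkeeping (the $G$-factors contribute $+\tfrac c2 N^2\log N$, exactly cancelled by $-N\log\Gamma(cN/2)$, and the $N^2$ terms assemble to the stated exponent), so there is nothing to add.
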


We remark that in the limit $c \to \infty$, in which case the entries of $P_1$ approach independent standard
Gaussians, (\ref{pnn2}) reduces to $p_{N,N}^{P_1}  \sim e^{-(N^2/4) \log 2}$. This latter result is consistent with the exact result
$p_{N,N}^{P_1}  = 2^{-N(N-1)/4}$ for $P_1$ a real standard Gaussian matrix \cite{Ed97}.

For $P_1$ a real standard Gaussian matrix, the probability $p_{N,0}^{P_1}$ (with $N$ even) that all eigenvalues are {\it complex}
has the large $N$ expansion \cite{KPTTZ15, Fo15e}
\begin{equation}\label{pnn3}
\frac{1}{\sqrt{N}}\log p_{N,0}^{P_1}=-\frac{1}{\sqrt{2\pi}}\zeta\Big(\frac32\Big)+\frac{C}{\sqrt{N}}+\cdots,
\end{equation}
where $\zeta(x)$ denotes the Riemann zeta function, and $C$ is an explicit constant with the numerical value
$0.0627\cdots$. Notice in particular the proportionality of $ \log p_{N,0}^{P_1}$ on $\sqrt{N}$ rather than $N^2$
as for $\log p_{N,N}^{P_1}$. One might speculate that this is a feature too of the large $N$ form of 
$\log p_{N,0}^{P_1}$ for $P_1$ a truncated real orthogonal matrix, although an analysis based on 
(\ref{gen-func-even}) and (\ref{gen-func-odd}) appears out of reach at the present time.

\subsection{Asymptotic behaviour of general \textit{m}}\label{s4.3}

In order to study the large-$N$ asymptotic behavior for general $m>1$, we take $L_1=\cdots=L_m=L$ in the following. 

It is generally believed that under weak assumptions the global spectral density of a product of $m$ independent and identically distributed random matrices is the same as for $m$-th power of a random matrix drawn from the same ensemble;
see e.g.~\cite{BNS12}. In the $m=1$ case, we know the global spectral density for a truncated orthogonal random matrix for $N,L\to\infty$ with $\alpha\in(0,1)$ fixed, see~\eqref{global1}. From this we compute that for the $m>1$ cases,
\begin{equation}\label{global-m}
\lim_{N\to\infty}\frac1N\rho_{(1)}^c(z)=\frac{1-\alpha}{m\pi\alpha}\frac{|z|^{2/m-2}}{(1-|z|^{2/m})^2} \, \chi_{|z|<\alpha^{m/2}}.
\end{equation}
We note that the density~\eqref{global-m} is the same as for products of truncated unitary matrices~\cite{BNS12,ABKN14}. In fact, as suggested in \cite{BNS12}, the validity of~\eqref{global-m} can be proven using techniques from free probability; figure~\ref{fig1} shows a comparison with numerical data.
\begin{figure}[htbp]
\centering
\includegraphics{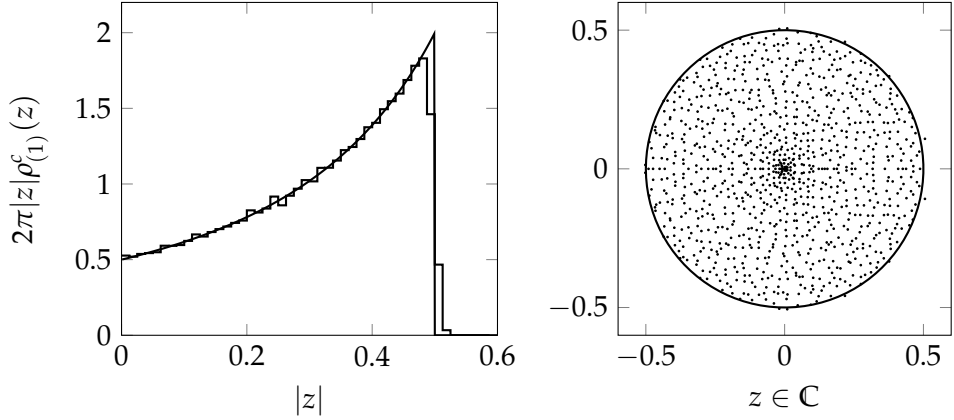}
\caption{The figure shows numerical data for a product of two (i.e. $m=2$) random truncated orthogonal matrices with $N=L=1000$. The left panel shows a histogram for absolute value of the eigenvalues (real eigenvalues are included but are sub-dominant) generated from $100$ realisation; the solid curve show the predicted large-$N$ density given by~\eqref{global-m}. The right panel shows a scatter plot of eigenvalues for a single realisation together with a circle of radius $\alpha^{m/2}$ which is the predicted radius of support as $N$ tends to infinity.}\label{fig1}
\end{figure}

When looking at the global spectrum for the real eigenvalues, we cannot employ techniques from free probability as the number of real eigenvalues are sub-dominant in $N$. This makes the real case more challenging. However, it is still believed that the global density (up to an overall normalisation) is same the $m$-th power of a single truncated orthogonal matrix, which gives us a conjecture for the density.
\begin{conjecture}\label{con1}
The normalised global spectral density for the real eigenvalues for $N,L\to\infty$ with $\alpha\in(0,1)$ and $m\geq1$ fixed is given by
\begin{equation}
\frac{\rho_{(1)}^r(x)}{\E(\#\text{reals})}=\frac{1}{2\,m\,\textrm{artanh}\sqrt\alpha}\,\frac{|x|^{1/m-1}}{(1-|x|^{2/m})}
\, \chi_{|x|<\alpha^{m/2}}.
\end{equation}
\end{conjecture}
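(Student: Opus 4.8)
The plan is to carry out a multidimensional steepest-descent analysis of the exact formula \eqref{4.8} for $S(x,x)=\rho_{(1)}^r(x)$. Specialise to $L_1=\cdots=L_m=L$, use the evenness of $w_r^{(m)}$ to restrict to $x\in(0,1)$, and write $(x-v)\,\mathrm{sgn}(x-v)=|x-v|$, so that
\begin{equation*}
\rho_{(1)}^r(x)=w_r^{(m)}(x)\int_{-1}^{1}|x-v|\,w_r^{(m)}(v)\,\Sigma_N(xv)\,\mathrm{d}v,
\qquad
\Sigma_N(t):=\sum_{j=0}^{N-2}\binom{L+j}{L}^{\!m}t^{\,j}.
\end{equation*}
The untruncated coefficient generating function is a hypergeometric function, $\sum_{j\ge0}\binom{L+j}{L}^{m}t^{j}={}_{m}F_{m-1}(L+1,\dots,L+1;1,\dots,1;t)$, reducing for $m=1$ to $(1-t)^{-(L+1)}$ and recovering the closed form \eqref{4.11}. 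Throughout take $N\to\infty$ with $L=\tfrac{1-\alpha}{\alpha}N$, so $L$ and $N$ grow proportionally.

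There are three asymptotic ingredients to assemble. First, the \emph{large-deviation behaviour of the convolution weight}: applying Laplace's method to the $m$-fold integral \eqref{2.3}, whose integrand is $\prod_l w(\lambda^{(l)};L)\propto\exp\!\big(\tfrac L2\sum_l\log(1-(\lambda^{(l)})^2)\big)$ restricted to $\prod_l\lambda^{(l)}=x$, the constrained maximum sits at the symmetric point $\lambda^{(l)}=x^{1/m}$ for all $l$. This yields $w_r^{(m)}(x)\asymp(1-x^{2/m})^{mL/2}$ with a subexponential prefactor that, crucially, carries a factor $x^{1/m-1}$ coming from the Jacobian $\big|\nabla\!\prod_l\lambda^{(l)}\big|^{-1}=\big(\sqrt m\,x^{1-1/m}\big)^{-1}$ of the multiplicative constraint at the critical point, along with further $x$-dependence from the restricted Hessian of $\sum_l\log(1-(\lambda^{(l)})^2)$; one expects the $x$-dependent part to be $x^{1/m-1}(1-x^{2/m})^{-1}(1+x^{2/m})^{-(m-1)/2}$. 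Second, the \emph{large-parameter behaviour of $\Sigma_N$}: for $t$ fixed and $L$ large the summand $\binom{L+j}{L}^{m}t^{j}$ peaks at $j^{\ast}=Lt^{1/m}/(1-t^{1/m})$ with exponential rate $\psi(t)=-m\log(1-t^{1/m})$ and Gaussian fluctuations in $j$ of width of order $\sqrt L$; the decisive observation is that $j^{\ast}<N$ — so the truncation is immaterial and the full ${}_mF_{m-1}$ asymptotics govern — precisely when $t<\alpha^m$.

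The final step is the $v$-integral, again by Laplace's method. Its exponential rate is $\phi(v):=\tfrac m2\log(1-v^{2/m})+\psi(xv)$, which (as in the $m=1$ case, where the critical point is $v=x$) is maximised at $v=x$ with $\phi(x)=-\tfrac m2\log(1-x^{2/m})$ exactly cancelling the exponential factor of $w_r^{(m)}(x)$, so the total exponential rate vanishes. Since $|x-v|$ has a simple zero at $v=x$, the Laplace estimate becomes $\int|u|e^{-\frac12 L|\phi''(x)|u^{2}}\,\mathrm{d}u=2/(L|\phi''(x)|)$, contributing an extra $1/L$. Collecting the powers of $L$ (two copies of the weight prefactor, the subexponential part of $\Sigma_N(x^2)$, and this $1/L$) gives $\rho_{(1)}^r(x)\sim\mathrm{const}\cdot\sqrt L\cdot g_m(x)$, of order $\sqrt N$; the bookkeeping of the $x$-dependent prefactors — the square of $x^{1/m-1}(1-x^{2/m})^{-1}(1+x^{2/m})^{-(m-1)/2}$ from the weight, times the $j$-saddle prefactor of $\Sigma_N(x^2)$, times $|\phi''(x)|^{-1}$ — is required to collapse to $g_m(x)=|x|^{1/m-1}/(1-|x|^{2/m})$ (one checks this is exactly what happens for $m=1$, where the three pieces are $(1-x^2)^{-2}$, $(1-x^2)^{-1}$ and $(1-x^2)^{2}$). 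The condition $j^{\ast}<N$ at the critical value $t=x^2$ reads $x^2<\alpha^m$, i.e. $|x|<\alpha^{m/2}$, which is exactly the support in Conjecture~\ref{con1}; for $|x|>\alpha^{m/2}$ the $v$-saddle is pinned near $xv=\alpha^m$ by the truncation and the contribution is exponentially suppressed, so $\rho_{(1)}^r(x)=o(\sqrt N)$ there. Finally, because the normalised density is scale invariant, the overall constant is fixed by $\int_{-\alpha^{m/2}}^{\alpha^{m/2}}|x|^{1/m-1}(1-|x|^{2/m})^{-1}\,\mathrm{d}x=2m\,\mathrm{artanh}\sqrt\alpha$ (substitute $t=|x|^{1/m}$), which gives the claimed formula; independently, the $\sqrt N$-prefactor and hence $\E(\#\mathrm{reals})$ can be extracted by the same method from the Meijer-$G$ representation \eqref{expect-real}--\eqref{a-real} as a cross-check.

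The main obstacle is the edge regime $|x|\approx\alpha^{m/2}$. Away from the edge the argument is a laborious but routine saddle-point computation, and the subexponential bookkeeping in the first ingredient, though the most error-prone part, is purely mechanical. The genuinely delicate points are: controlling the \emph{truncated} hypergeometric $\Sigma_N$ uniformly as $xv$ crosses the threshold $\alpha^m$ — this is the mechanism that produces the sharp indicator $\chi_{|x|<\alpha^{m/2}}$, visible for $m=1$ as the incomplete-beta factor $I_{1-x^2}(L+1,N-1)$ in \eqref{4.11} interpolating between $1$ and $0$; ruling out a competing ``boundary'' contribution analogous to the second term of \eqref{4.11}, exponentially subdominant inside the support but to be shown so for all $m$; and establishing enough uniformity on compact subsets of $(-\alpha^{m/2},\alpha^{m/2})$ to pass the normalisation $\int\rho_{(1)}^r$ to the limit. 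It is the rigorous treatment of $\Sigma_N$ near and beyond the edge that separates the conjecture from a theorem.
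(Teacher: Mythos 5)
This statement is a \emph{conjecture} in the paper: the authors give no proof, only a heuristic (the global density of real eigenvalues of a product of $m$ i.i.d.\ factors should match that of the $m$-th power of a single factor, whose $m=1$ density is known from Proposition~\ref{prop:density-alpha-fixed}) together with the numerical check in Figure~\ref{fig3}. Your proposal therefore takes a genuinely different — and more substantive — route: a direct saddle-point analysis of the exact finite-$N$ kernel \eqref{4.8}. The skeleton is sound and several key outputs check out: the constrained maximisation of the convolution weight at $\lambda^{(l)}=x^{1/m}$ giving the rate $(1-x^{2/m})^{mL/2}$; the location $j^{\ast}=Lt^{1/m}/(1-t^{1/m})$ of the peak of $\Sigma_N$, whose condition $j^{\ast}<N$ with $L=N(1-\alpha)/\alpha$ is exactly $t<\alpha^m$ and hence yields the support $|x|<\alpha^{m/2}$; the exact cancellation of exponential rates at the $v$-saddle $v=x$; the extra $1/L$ from the vanishing factor $|x-v|$; and the normalisation integral $\int|x|^{1/m-1}(1-|x|^{2/m})^{-1}\chi_{|x|<\alpha^{m/2}}\,\mathrm{d}x=2m\,\mathrm{artanh}\sqrt\alpha$. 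Your $m=1$ prefactor check ($(1-x^2)^{-2}\cdot(1-x^2)^{-1}\cdot(1-x^2)^{2}=(1-x^2)^{-1}$) is also correct. What each approach buys: the paper's heuristic is instant but unprovable as stated; yours, if completed, would upgrade the conjecture to a theorem and would simultaneously deliver Conjecture~\ref{con2} via the $\sqrt{L}$ prefactor.

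That said, be clear that what you have is a programme, not a proof, and the gaps are not only the ones you flag. The most exposed step is the claimed collapse of the $x$-dependent subexponential prefactors to $|x|^{1/m-1}(1-|x|^{2/m})^{-1}$ for general $m$: your only verification is at $m=1$, where $|x|^{1/m-1}\equiv 1$, so that check is blind to precisely the power of $|x|$ that distinguishes $m\ge2$ (note also that the weight enters \emph{squared}, contributing $|x|^{2(1/m-1)}$, so the $j$-saddle prefactor and $|\phi''(x)|^{-1}$ must supply a compensating $|x|^{1-1/m}$ — this needs to be computed, not assumed). Beyond that, your own list is accurate: uniform control of the truncated hypergeometric $\Sigma_N$ as $xv$ crosses $\alpha^m$ (the mechanism behind the sharp indicator, seen at $m=1$ in the incomplete-beta factor of \eqref{4.11}), exclusion of a boundary contribution analogous to the second term of \eqref{4.11}, and enough uniformity to integrate the asymptotics. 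These are exactly the issues the paper sidesteps by leaving the statement as a conjecture and pointing to Simm's method \cite{Si16} as a possible route.
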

We note that in the small-$\alpha$ limit (i.e. $L\gg N$), we have
\begin{equation}
\lim_{\alpha\to0}\frac{\alpha^{m/2}\rho_{(1)}^r(\alpha^{m/2}x)}{\E(\#\text{reals})}=\frac{|x|^{1/m-1}}{2\,m} \,
\chi_{-1<x< 1},
\end{equation}
which we recognise as the global spectral density for the real eigenvalues of a product Gaussian matrices (conjectured in~\cite{FI16} and proven by Simm in~\cite{Si16}). This is consistent with a known transition from truncated orthogonal matrices to real Gaussian matrices for $L\gg N$. Figure~\ref{fig3} verifies that there is good agreement between numerical data and Conjecture~\ref{con1}. 
\begin{figure}[htbp]
\centering
\includegraphics{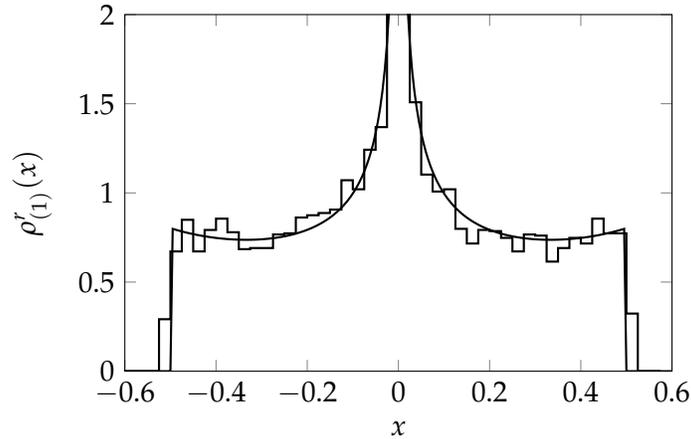}
\caption{The figure shows a histogram of the real eigenvalues for $100$ realisations of a product of two (i.e. $m=2$) random truncated orthogonal matrices with $N=L=1000$. The solid curve show the conjectural large-$N$ density given by Conjecture~\ref{con1}.}\label{fig3}
\end{figure}

% It follows from Conjecture~\ref{con1} that the average number of real eigenvalues grows asymptotically as
% \begin{equation}
% \E(\#\text{reals})\sim c_{N,\alpha,m}\int_{-\alpha^{m/2}}^{+\alpha^{m/2}}\frac{|x|^{1/m-1}dx}{m(1-|x|^{2/m})}
% =2c_{N,\alpha,m}\textrm{artanh}\sqrt\alpha
% \end{equation}
% with $c_{N,\alpha,m}$ some unknown prefactor. For $m=1$, we know from~\eqref{Ereal-m=1-again} that $c_{N,\alpha,1}=\sqrt{N(1-\alpha)/\pi\alpha}$.
% Moreover, due to the reduction to the Gaussian case for $L\gg N$, we know from~\cite{Si16} that
% \begin{equation}
% \lim_{\alpha\to0} \sqrt2\, c_{n,\alpha,m}\textrm{artanh}\sqrt\alpha=\sqrt{\frac{2mN}{\pi}}
% \end{equation}
% for any fixed $m\geq1$. Based on these observations, we state the following conjecture.

Based on known behaviour for $m=1$ (Corollary~\ref{col:Em=1}) as well as known results in the Gaussian case~\cite{Si16}, we furthermore state the following conjecture.
\begin{conjecture}\label{con2}
For $N,L\to\infty$ with $\alpha=N/(N+L)\in(0,1)$ and $m>0$ fixed, the average number of real eigenvalues grows asymptotically as 
\begin{equation}
\E(\#\text{reals})\sim 2\sqrt{\frac{mN(1-\alpha)}{\pi\alpha}}\textrm{artanh}\sqrt\alpha.
\end{equation}
\end{conjecture}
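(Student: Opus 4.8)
We conclude with an outline of a possible proof of Conjecture~\ref{con2}; the scheme below simultaneously establishes Conjecture~\ref{con1}. The idea is to read off the leading asymptotics directly from the exact finite-$N$ formulae, working throughout in the double-scaling regime $L_1=\cdots=L_m=L=(1-\alpha)N/\alpha$ with $\alpha\in(0,1)$ fixed. Combining \eqref{4.8} with \eqref{4.20} (equivalently \eqref{expect-real}) one has
\begin{equation*}
\rho_{(1)}^r(x)=S(x,x)=w_r^{(m)}(x)\int_{-1}^1|x-v|\,w_r^{(m)}(v)\,\Sigma_N(xv)\,\mathrm{d}v,\qquad
\Sigma_N(t):=\sum_{j=0}^{N-2}\binom{L+j}{L}^{m}t^{j},
\end{equation*}
and $\E(\#\mathrm{reals})=\int_{-1}^1\rho_{(1)}^r(x)\,\mathrm{d}x$. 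Three ingredients then require uniform asymptotics: the weight $w_r^{(m)}$, the coefficients $\binom{L+j}{L}^m$, and the truncated sum $\Sigma_N$. (We note that the ``$m$-th power'' heuristic used to motivate Conjecture~\ref{con1} predicts the shape $|x|^{1/m-1}/(1-|x|^{2/m})$ but \emph{not} the $\sqrt m$ enhancement of the count, so a genuine asymptotic analysis is unavoidable.)

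First, from its definition \eqref{2.3}--\eqref{2.4} as an $m$-fold multiplicative convolution of the functions $w(\lambda;L)\propto L^{3/4}(1-\lambda^2)^{L/2-1}$, Laplace's method applied to the convolution integral should give $w_r^{(m)}(\lambda)=\exp\{-N\phi_\alpha^{(m)}(\lambda)+O(\log N)\}$ for a convex rate function $\phi_\alpha^{(m)}$ vanishing only at $\lambda=0$ (for $m=1$, $\phi_\alpha^{(1)}(\lambda)=-\tfrac{1-\alpha}{2\alpha}\log(1-\lambda^2)$); in particular $w_r^{(m)}$ is exponentially small for $\lambda$ bounded away from $0$. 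Second, Stirling gives $\binom{L+j}{L}\approx\exp\{Nf(j/N)\}$ with $f(s)=(\tfrac{1-\alpha}{\alpha}+s)\log(\tfrac{1-\alpha}{\alpha}+s)-\tfrac{1-\alpha}{\alpha}\log\tfrac{1-\alpha}{\alpha}-s\log s$. Third, writing the summand of $\Sigma_N(t)$ as $\exp\{N[mf(j/N)+(j/N)\log t]\}$ and using the Laplace method for sums, the exponent $g_t(s)=mf(s)+s\log t$ has the unique critical point $s^{\ast}(t)=\tfrac{1-\alpha}{\alpha}\cdot\tfrac{t^{1/m}}{1-t^{1/m}}$ (for $t>0$; negative $t$ is handled by parity), which lies in $(0,1)$ precisely when $|t|<\alpha^m$. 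Hence $\Sigma_N(t)\asymp\exp\{Ng_t(s^{\ast}(t))\}$ for $|t|<\alpha^m$, while for $|t|>\alpha^m$ the sum is dominated by its last term.

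Inserting these three forms into the $v$-integral and carrying out a final Laplace analysis in $v$, one expects the exponential decay coming from $\phi_\alpha^{(m)}(x)+\phi_\alpha^{(m)}(v)$ to be cancelled exactly by the exponential growth of $\Sigma_N(xv)$ for $|x|<\alpha^{m/2}$, with the $v$-saddle contributing a Gaussian prefactor of order $\sqrt N$; this should produce
\begin{equation*}
\rho_{(1)}^r(x)\sim\sqrt{\frac{N(1-\alpha)}{\pi\alpha m}}\,\frac{|x|^{1/m-1}}{1-|x|^{2/m}}\qquad(|x|<\alpha^{m/2}),
\end{equation*}
i.e. Conjecture~\ref{con1}. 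Integrating, the substitution $u=|x|^{1/m}$ gives $\int_{-\alpha^{m/2}}^{\alpha^{m/2}}|x|^{1/m-1}(1-|x|^{2/m})^{-1}\,\mathrm{d}x=2m\int_0^{\sqrt\alpha}(1-u^2)^{-1}\,\mathrm{d}u=2m\,\mathrm{artanh}\sqrt\alpha$, and therefore $\E(\#\mathrm{reals})\sim2\sqrt{mN(1-\alpha)/(\pi\alpha)}\,\mathrm{artanh}\sqrt\alpha$, as claimed. For $m=1$ each step reduces to the classical steepest-descent treatment of \eqref{4.11} underlying Proposition~\ref{prop:density-alpha-fixed} and Corollary~\ref{col:Em=1}, which serves as a consistency check and fixes the numerical constant.

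The hard part will be making the three Laplace/saddle analyses uniform and, above all, handling the $v$-integral: it is effectively a difference of two quantities of exponential-in-$N$ size, so the $\sqrt N$ term surfaces only after one proves that the exponents agree identically on the support, i.e. the identity $\phi_\alpha^{(m)}(x)+\phi_\alpha^{(m)}(v^{\ast})-g_{xv^{\ast}}(s^{\ast})\equiv0$ for $|x|<\alpha^{m/2}$ (where $v^{\ast}=v^{\ast}(x)$ is the $v$-saddle and $s^{\ast}=s^{\ast}(xv^{\ast})$), together with non-degeneracy of the accompanying Hessian. A second delicate point is the edge $|x|=\alpha^{m/2}$, where $s^{\ast}(xv)$ collides with the endpoint $j=N-2$ of the summation and the local behaviour becomes Airy-type; this region is of lower order and does not affect the $\sqrt N$ coefficient of $\E(\#\mathrm{reals})$, so it suffices to bound it, along with the exponentially small tails $|x|>\alpha^{m/2}$. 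Finally, the $m$-fold convolution \eqref{2.3} must be controlled uniformly as $\lambda\to0$, where $\phi_\alpha^{(m)}$ degenerates; an alternative that sidesteps this is to keep $w_r^{(m)}$ in Mellin-transformed form---its Mellin transform factorises over the $m$ truncations into ratios of gamma functions---and to run the entire argument as a multiple Mellin--Barnes steepest-descent computation applied directly to the Meijer $G$-functions appearing in \eqref{expect-real}.
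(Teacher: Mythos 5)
The statement you are addressing is a \emph{conjecture}: the paper offers no proof of it. Its support in the text is (i) the rigorously known $m=1$ case (Proposition~\ref{prop:density-alpha-fixed} and Corollary~\ref{col:Em=1}), (ii) the analogous result for products of real Gaussian matrices proved by Simm, (iii) the ``$m$-th power'' heuristic used to motivate Conjecture~\ref{con1}, and (iv) the numerical data of Figure~\ref{fig2}; indeed the authors explicitly defer an actual proof to future work. So your proposal should be judged as a proof \emph{programme} rather than compared against a proof the paper does not contain.

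As a programme it is sensible and, where it can be checked, consistent. Your starting identity $S(x,x)=w_r^{(m)}(x)\int_{-1}^1|x-v|\,w_r^{(m)}(v)\Sigma_N(xv)\,\mathrm{d}v$ follows correctly from \eqref{4.8}; the rate function $\phi_\alpha^{(1)}$, the Stirling exponent $f(s)$, the saddle $s^\ast(t)=\tfrac{1-\alpha}{\alpha}\,t^{1/m}/(1-t^{1/m})$, and the support condition $s^\ast<1\iff|t|<\alpha^m$ are all correct, and they reproduce the support $|x|<\alpha^{m/2}$ and the profile of Conjecture~\ref{con1}; the closing substitution $u=|x|^{1/m}$ yielding $2m\,\mathrm{artanh}\sqrt{\alpha}$ is exact, and the $m=1$ specialisation matches \eqref{real-density} and Corollary~\ref{col:Em=1}. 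Your route is essentially the one the authors themselves suggest (extending Simm's steepest-descent treatment from the Gaussian to the truncated-orthogonal case), but carried out directly on the exact kernel \eqref{4.8} rather than on the heuristic of powers of a single matrix --- and you are right that the heuristic cannot see the $\sqrt{m}$ factor in the count, so this extra work is genuinely necessary.

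That said, the proposal is not a proof, and the gaps you flag are exactly the hard ones: the claimed identity $\phi_\alpha^{(m)}(x)+\phi_\alpha^{(m)}(v^\ast)=g_{xv^\ast}(s^\ast)$ on the support, the non-degeneracy of the Hessian that produces the $\sqrt{N}$ prefactor with the precise constant $\sqrt{(1-\alpha)/(\pi\alpha m)}$, and uniform control of $w_r^{(m)}$ near $\lambda=0$ where the $m$-fold convolution develops the $|x|^{1/m-1}$ singularity, are all asserted rather than established. One further point deserves more than the parenthetical ``handled by parity'': for $xv<0$ the sum $\Sigma_N(xv)$ is alternating, so the contribution from $v$ of sign opposite to $x$ involves cancellation between two exponentially large pieces and must be shown to be subdominant before the $v$-integral can be localised at $v^\ast$ with $xv^\ast>0$. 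None of these obstacles looks fatal --- your Mellin--Barnes alternative acting directly on the Meijer $G$-functions in \eqref{expect-real} is probably the cleanest way to make them rigorous --- but until they are overcome the statement remains, as in the paper, a conjecture.
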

Figure~\ref{fig2} shows that there is agreement between Conjecture~\ref{con2} and numerical data; we recall that asymptotic behaviour for the $m=1$ case (also shown on figure~\ref{fig2}) is known to be true (i.e.~this case is not conjectural). 
\begin{figure}[htbp]
\centering
\includegraphics{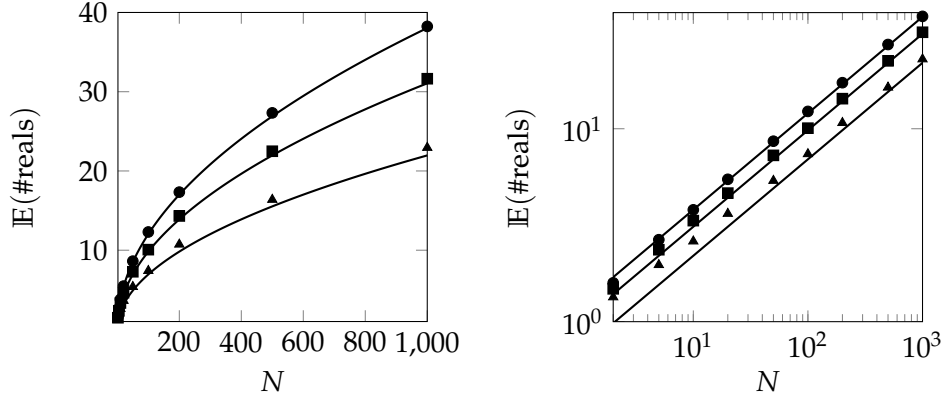}
\caption{The figure compare numerical data for the average number of real eigenvalues (data points) with the conjectural asymptotic behaviour from Conjecture~\ref{con2} (solid curves). Numerical data is provided for $m=1,2,3$ (denoted by $\blacktriangle,\blacksquare,\bullet$, respectively) and $N=L=2,5,10,20,50,100,200,500,1000$ (using $50\,000,20\,000,10\,000,5000,2000,1000,500,200,100$ realisations, respectively). The data depicted on the two panels are the same but the left panel has axes with linear scale while the right panel is double-logarithmic.}
\label{fig2}
\end{figure}

It would interesting to see if the method presented in~\cite{Si16} for Gaussian matrices could be extended to prove Conjecture~\ref{con1} and~\ref{con2} for truncated orthogonal matrices. However, such an analysis is beyond the scope of the present paper and will be postponed to future work. 

An even more challenging task is to go beyond the average number of real eigenvalues (Conjecture~\ref{con2}) and ask for the probability distribution of the number of real eigenvalues as the matrix dimension tends to infinity. Under general (but not fully understood) conditions, it is believed that the number of real eigenvalues satisfy a `central limit theorem' for large matrix dimensions~\cite{MPT16}. Based on numerical evidence, we conjecture that a similar result holds for the product ensembles considered in this paper.

\begin{conjecture}\label{con3}
Let $\mathcal E$ be a random variable given by the number of real eigenvalues
of a product of truncated orthogonal matrices with parameters $N,L,\alpha,m$ defined as above.
For $N,L\to\infty$ with $\alpha\in(0,1)$ and $m\geq1$ fixed, we have
\begin{equation}\label{4.35}
\frac{\mathcal E-\E[\mathcal E]}{\sqrt{(2-\sqrt{2})\E[\mathcal E]}}\stackrel{d}\longrightarrow\mathcal N(0,1),
\end{equation}
i.e. $\mathcal E$ converges (in distribution) to a normal random variable.
\end{conjecture}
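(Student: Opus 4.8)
The plan to establish Conjecture~\ref{con3} is to exploit the exact generating-function formula of Theorem~\ref{theorem2} and to prove a central limit theorem by controlling every cumulant of $\mathcal E$. Since $Z_N(\zeta)=\E[\zeta^{\mathcal E}]$, the map $s\mapsto\log Z_N(e^s)$ is the cumulant generating function of $\mathcal E$, so (\ref{4.35}) is equivalent to the two statements $\mathrm{Var}(\mathcal E)\sim(2-\sqrt2)\,\E[\mathcal E]$ and $\kappa_n(\mathcal E)=o\big((\E[\mathcal E])^{n/2}\big)$ for all $n\ge3$, where $\kappa_n$ denotes the $n$-th cumulant. The first step is to rewrite $Z_N$ as a Fredholm-type determinant. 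For $N$ even, $b_{j,k}(\zeta)=(\zeta^2-1)\tilde a_{j,k}+h_{j-1}\delta_{j,k}$ with $\tilde a_{j,k}:=a_{2j-1,2k}-\big(\prod_i\tfrac{2k-2}{L_i+2k-2}\big)a_{2j-1,2k-2}$, and the normalisation identity $\big(\prod_iK_{N,L_i}\big)\prod_{j=1}^{N/2}h_{j-1}=1$ (established in the proof of the Lemma in Section~\ref{s3.2}) turns (\ref{gen-func-even}) into
\begin{equation*}
Z_N(\zeta)=\det\!\big(\mathbb I+(\zeta^2-1)B_N\big),\qquad B_N:=\big(\mathrm{diag}(h_{j-1})\big)^{-1}[\tilde a_{j,k}]_{j,k=1}^{N/2},
\end{equation*}
with an analogous bordered determinant for $N$ odd contributing only lower-order corrections.

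Expanding $\log\det(\mathbb I+(\zeta^2-1)B_N)=\sum_{k\ge1}\tfrac{(-1)^{k-1}}{k}(\zeta^2-1)^k\mathrm{tr}(B_N^k)$ and substituting $\zeta=e^s$ gives $\E[\mathcal E]=2\,\mathrm{tr}(B_N)$, $\mathrm{Var}(\mathcal E)=4\,\mathrm{tr}(B_N)-4\,\mathrm{tr}(B_N^2)$, and more generally $\kappa_n(\mathcal E)$ as a fixed universal polynomial in $\mathrm{tr}(B_N),\dots,\mathrm{tr}(B_N^n)$. Thus the conjecture follows once one establishes: (i) $\mathrm{spec}(B_N)\subset[0,1)$ uniformly in $N$ (this is explicit for $m=1$); (ii) $\mathrm{tr}(B_N)=\tfrac12\E[\mathcal E]\asymp\sqrt N$, which is the content of Conjecture~\ref{con2}; and, crucially, (iii) $\mathrm{tr}(B_N^2)/\mathrm{tr}(B_N)\to1/\sqrt2$, which is precisely what produces the constant $2-\sqrt2$ in $\mathrm{Var}(\mathcal E)=\big(4-4\,\mathrm{tr}(B_N^2)/\mathrm{tr}(B_N)\big)\mathrm{tr}(B_N)$. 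Given (i) one has $|\mathrm{tr}(B_N^k)|\le\mathrm{tr}(B_N)$ for every $k\ge1$, whence $\kappa_n(\mathcal E)=O(\sqrt N)=o(N^{n/4})=o\big((\E[\mathcal E])^{n/2}\big)$ for $n\ge3$; this is the mechanism that suppresses the higher cumulants, and it reduces everything to (i), (ii), (iii).

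The analytic heart is the large-$N$ behaviour of the traces $\mathrm{tr}(B_N^k)$, equivalently of the real--real correlation kernel $S$ of (\ref{4.8}) and its companions $D,\tilde I$ in (\ref{4.2}): with $L_1=\cdots=L_m=L=N(1-\alpha)/\alpha$ one has $\mathrm{tr}(B_N)=\tfrac12\int_{-1}^1S(x,x)\,dx$, while $\mathrm{tr}(B_N^2)$ is determined by $\iint S(x,y)S(y,x)\,dx\,dy$ and $\iint D(x,y)\tilde I(x,y)\,dx\,dy$. From (\ref{4.8}), $S(x,y)$ equals $w_r^{(m)}(x)$ times a $v$-integral of $w_r^{(m)}(v)$ against the partial sum $\sum_{j=0}^{N-2}\binom{L+j}{L}^m(xv)^j$, and the large-$L$ form of $w_r^{(m)}$ follows from (\ref{2.3})--(\ref{2.4}) by Laplace's method. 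Rescaling near a generic bulk point $x_0\in(-\alpha^{m/2},\alpha^{m/2})$ and running a steepest-descent analysis of the partial sum, one expects $S$ to converge, in the appropriate local coordinate, to the \emph{same} scaling limit as the real--real kernel of the real Ginibre ensemble --- the universality statement underlying Conjecture~\ref{con3} --- so that the limiting double integrals reproduce $\mathrm{tr}(B_N^2)\sim\mathrm{tr}(B_N)/\sqrt2$ exactly as in Simm's analysis of products of real Gaussian matrices~\cite{Si16}. The same asymptotic input would simultaneously yield Conjecture~\ref{con1} for the real density and Conjecture~\ref{con2} for the mean.

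The main obstacle is the kernel universality step for $m\ge2$. When $m=1$ the explicit incomplete-beta representations (\ref{4.11})--(\ref{4.12}) make the steepest-descent analysis routine, and (i) is visible from the product form of $Z_N$; but for $m\ge2$ one must control the Meijer $G$-functions in (\ref{3.15}) uniformly for $j,k$ of order $N$, or equivalently the partial sums $\sum_j\binom{L+j}{L}^m(xv)^j$ with $L$ also growing linearly in $N$, whose integrand exhibits coalescing saddles at the spectral edge $|x|=\alpha^{m/2}$ and a genuinely $m$-dependent bulk structure. Establishing (i) in general --- positivity and boundedness of $\mathrm{spec}(B_N)$ --- together with uniformity of the kernel asymptotics up to the edge and negligibility of the near-edge contribution to the cumulant integrals is where the real difficulty lies. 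Once the bulk scaling limit of $S$ has been identified and matched to real Ginibre, the combinatorial bookkeeping of the cumulant expansion is formally identical to the Gaussian case of~\cite{Si16} and should transfer with only cosmetic changes.
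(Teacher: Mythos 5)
The statement you are addressing is labelled a \emph{conjecture} in the paper: the authors offer no proof, only numerical evidence (their Figure comparing histograms of $\mathcal E$ with a Gaussian) and the analogy with the real Ginibre and Gaussian-product cases, and they explicitly remark that extracting the asymptotic distribution from Theorem~\ref{theorem2} is ``a highly non-trivial task.'' So there is no proof in the paper against which to check yours, and your submission should be judged as a proof on its own terms. As such, it is not one. Your reduction is correct and well organised: rewriting $Z_N(\zeta)=\det(\mathbb I+(\zeta^2-1)B_N)$ via the normalisation identity, reading off $\E[\mathcal E]=2\,\mathrm{tr}(B_N)$ and $\mathrm{Var}(\mathcal E)=4\,\mathrm{tr}(B_N)-4\,\mathrm{tr}(B_N^2)$, and noting that spectrum of $B_N$ in $[0,1]$ forces $\kappa_n(\mathcal E)=O(\mathrm{tr}\,B_N)=o\bigl((\E[\mathcal E])^{n/2}\bigr)$ for $n\ge3$, is a sound cumulant-method scheme (indeed, if $\mathrm{spec}(B_N)\subset[0,1]$ is real, $\mathcal E$ is distributed as twice a sum of independent Bernoullis and the CLT would follow from Lindeberg--Feller once the variance diverges). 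The arithmetic also correctly locates the universal constant: $\mathrm{Var}/\E\to 2-\sqrt2$ is exactly the statement $\mathrm{tr}(B_N^2)/\mathrm{tr}(B_N)\to 1/\sqrt2$.

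The gap is that all three inputs (i)--(iii) on which everything now rests are asserted, not proved, and you say so yourself. Item (i) (reality, nonnegativity and boundedness of $\mathrm{spec}(B_N)$, uniformly in $N$) is not ``explicit'' even for $m=1$ in the paper --- the product formula \eqref{pnn1} evaluates $\det B_N$, not its spectrum --- and for $m\ge2$ nothing in Theorem~\ref{theorem2} or \eqref{3.15} gives it. Item (ii) is Conjecture~\ref{con2}, itself open. Item (iii), the trace ratio $1/\sqrt2$, requires the bulk scaling limit of the full $2\times2$ matrix kernel \eqref{4.2} (not just $S$; the $\mathrm{tr}(B_N^2)$ term involves $D$ and $\tilde I$ as well), i.e.\ precisely the uniform large-$j,k$ asymptotics of the Meijer $G$-functions in \eqref{3.15} with $L\asymp N$, which is the genuinely hard and unresolved analytic problem for products. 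In short, you have given a credible programme that reduces Conjecture~\ref{con3} to a kernel-universality statement matching Simm's Gaussian analysis, but the conjecture remains exactly as open after your argument as before it; none of the steps that would distinguish a proof from the heuristic already present in the paper is carried out.
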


Figure~\ref{fig4} compares the Gaussian prediction from Conjecture~\ref{con3} with numerical data generated from $1000$-by-$1000$ matrices. We see that there is excellent agreement between the numerical data and the conjecture. Note that Theorem~\ref{theorem2} together with the expansion~\eqref{3.2} gives us an explicit way to determine the probability distribution for the number of real eigenvalues for finite $N$ and $L$. However, it is a highly non-trivial task to use these explicit formulae to proof the asymptotic result given by Conjecture~\ref{con3}.

We see from (\ref{4.35}) that for large $N$, ${\rm Var} \, (\mathcal E) = (2 - \sqrt{2}) \E[\mathcal E]$
independent of $m$. 
 In fact, this same proportionality has been observed in several other asymmetric random matrix ensembles both analytically~\cite{FN07,FM11,Si17,Ko15} and numerically~\cite{MPT16}. Thus, in this setting $2 - \sqrt{2}$ is believed to be a universal constant.

\begin{figure}[htbp]
\centering
\includegraphics{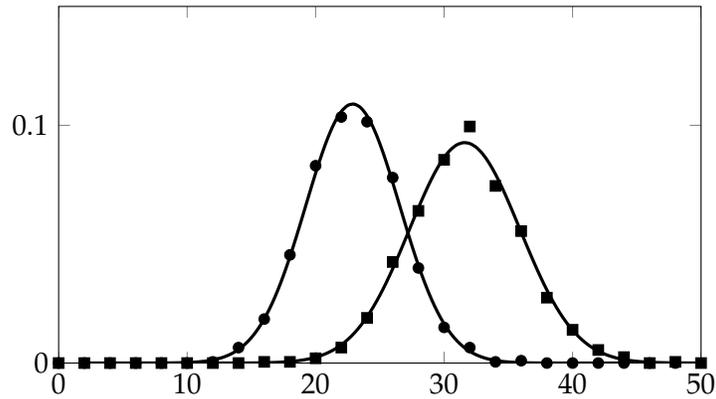}
\caption{The figure compares numerical data for the number of real eigenvalues (data points) with the conjectural asymptotic behaviour from Conjecture~\ref{con3} (solid curves). Numerical data is provided for $m=1,2$ (denoted by $\bullet$,{\tiny $\blacksquare$}, respectively) and they are generated from $1\,000$ realizations of matrices with $N=L=1000$.} 
\label{fig4}
\end{figure}

\newpage
%\bibliographystyle{amsplain}
%\bibliography{book1t16}

\providecommand{\bysame}{\leavevmode\hbox to3em{\hrulefill}\thinspace}
\providecommand{\MR}{\relax\ifhmode\unskip\space\fi MR }
% \MRhref is called by the amsart/book/proc definition of \MR.
\providecommand{\MRhref}[2]{%
  \href{http://www.ams.org/mathscinet-getitem?mr=#1}{#2}
}
\providecommand{\href}[2]{#2}

\end{document}